\documentclass{article}
\usepackage{enumitem}
\setlist[itemize]{leftmargin=*}
\setlist[enumerate]{leftmargin=*}
% if you need to pass options to natbib, use, e.g.:
    \PassOptionsToPackage{numbers, compress}{natbib}
% before loading neurips_2021

\usepackage{subcaption}
\captionsetup[subfigure]{labelformat=simple}

\newcommand{\Prob}[1]{Pr[#1]}
\renewcommand{\epsilon}{\varepsilon}

% ready for submission
% \usepackage{neurips_2022}
% for preprint
\usepackage[preprint]{neurips_2022}
\usepackage{graphicx}
\usepackage{wrapfig}
\usepackage{amsthm}
\usepackage{amsmath}
\usepackage{amssymb}

\theoremstyle{definition}
\newtheorem{dfn}{Definition}
\newtheorem{thm}{Theorem}
\newtheorem{proposition}{Proposition}
\usepackage{algorithmic}
\usepackage{algorithm}
\usepackage{booktabs}
\usepackage{multirow}

% For arrow
\usepackage{tikz}
\usetikzlibrary{calc}

% to compile a preprint version, e.g., for submission to arXiv, add add the
% [preprint] option:
    % \usepackage[preprint]{neurips_2021}

% to compile a camera-ready version, add the [final] option, e.g.:
%     \usepackage[final]{neurips_2021}

% to avoid loading the natbib package, add option nonatbib:
    % \usepackage[nonatbib]{neurips_2021}

\usepackage[utf8]{inputenc} % allow utf-8 input
\usepackage[T1]{fontenc}    % use 8-bit T1 fonts
\usepackage{hyperref}       % hyperlinks
\usepackage{url}            % simple URL typesetting
\usepackage{booktabs}       % professional-quality tables
\usepackage{amsfonts}       % blackboard math symbols
\usepackage{nicefrac}       % compact symbols for 1/2, etc.
\usepackage{microtype}      % microtypography
\usepackage{xcolor}         % colors

\title{\fontsize{17pt}{0cm}\selectfont{Measuring Lower Bounds of Local Differential Privacy via Adversary Instantiations in Federated Learning}}
%\title{Measuring Local Differential Privacy via \\ Adversary Instantiations in Federated Learning}
%\title{How private is Federated Learning \\under Local Differential Privacy?}

% The \author macro works with any number of authors. There are two commands
% used to separate the names and addresses of multiple authors: \And and \AND.
%
% Using \And between authors leaves it to LaTeX to determine where to break the
% lines. Using \AND forces a line break at that point. So, if LaTeX puts 3 of 4
% authors names on the first line, and the last on the second line, try using
% \AND instead of \And before the third author name.

\author{
  Marin Matsumoto\\
  Ochanomizu University\\
  \texttt{marin@ogl.is.ocha.ac.jp} \\
  \And
  Tsubasa Takahashi \\
  LINE Corporation \\
  \texttt{tsubasa.takahashi@linecorp.com} \\
  \And
  Seng Pei Liew \\
  LINE Corporation \\
  \texttt{sengpei.liew@linecorp.com} \\
  \And
  Masato Oguchi \\
  Ochanomizu University \\
  \texttt{oguchi@is.ocha.ac.jp} \\
}

\begin{document}

\maketitle

\begin{abstract}
Local differential privacy (LDP) gives a strong privacy guarantee to be used in a distributed setting like federated learning (FL).
LDP mechanisms in FL protect a client's gradient by randomizing it on the client; however, how can we interpret the privacy level given by the randomization?
Moreover, what types of attacks can we mitigate in practice?
To answer these questions, we introduce an empirical privacy test by measuring the lower bounds of LDP.
The privacy test estimates how an adversary predicts if a reported randomized gradient was crafted from a raw gradient $g_1$ or $g_2$.
We then instantiate six adversaries in FL under LDP to measure empirical LDP at various attack surfaces, including a worst-case attack that reaches the theoretical upper bound of LDP.
The empirical privacy test with the adversary instantiations enables us to interpret LDP more intuitively and discuss relaxation of the privacy parameter until a particular instantiated attack surfaces.
We also demonstrate numerical observations of the measured privacy in these adversarial settings, and the worst-case attack is not realistic in FL.
In the end, we also discuss the possible relaxation of privacy levels in FL under LDP.
\end{abstract}

\section{Introduction}
Federated learning (FL) is a machine learning technique in which clients never share the raw data but the gradient with the server.
Exposing only the gradient would seem to protect the client's privacy; however, several studies\citep{paper-inverting-grad, paper-cvpr-gradinversion} have shown that the gradient can recover the original image.
One way of privacy protection is applying local differential privacy (LDP) \citep{evfimievski2003limiting, kasiviswanathan2011can}.
%LDP has been a widely accepted privacy standard that makes it hard to identify raw data to the extent that the privacy parameter $\epsilon$ quantifies it.
LDP has been a widely accepted privacy standard that makes it hard to distinguish randomized responses crafted from any inputs to the extent that the privacy parameter $\epsilon$ quantifies it.
Recent studies have been seeking a practical LDP mechanisms in FL \citep{bhowmick2018protection,liu2020fedsel,girgis2021shuffled}.

To seek practice of LDP in FL, we need to understand the criteria for setting the privacy parameter $\epsilon$.
We have a study about statistical interpretation \citep{hoshino2020firmfoundation,lee2011much}, but the interpretations are still complicated and not enough for practitioners yet.
To give more intuitive interpretations that enable us to assess guaranteed privacy, we have the following two issues;
1) how can we interpret the privacy level given by LDP mechanisms?
2) what types of attacks can we mitigate in practice?
To answer these questions, we first introduce a privacy measurement test based on statistical discrimination between randomized responses under various adversaries.
Based on the privacy measurement test, we tackle instantiations of adversaries in FL using locally differentially private stochastic gradient descent (LDP-SGD) \cite{duchi2018,ldp-sgd-google}.
As for the adversary instantiations, we consider various capabilities for crafting raw gradients (i.e., what types of information the adversary is allowed to access).
%As for the adversary instantiations, we consider various different access levels for crafting raw gradients.

Figure \ref{fig:overview} shows our empirical privacy measurement that is inspired by the hypothesis tests for measuring the lower bound of central DP (CDP) \cite{adversary-instantiation, auditing-DP}.
%In this paper, we define black-box as accessing only distributed parameters and white-box as accessing randomized gradients.
The privacy test estimates how an adversary predicts if a reported randomized gradient was crafted from a raw gradient $g_1$ or $g_2$.
To realize such distinguishing game, we introduce \textbf{crafter} and \textbf{distinguisher}.
Crafter (maliciously) generates $g_1$ and $g_2$, then submits a randomized gradient of either one.
Distinguisher attempts to discriminate the source of the randomized gradient.
The empirical LDP is estimated over sufficient number of games.

As for the adversary instantiations, the main challenge is to actualize robust adversary attacks against the noise to achieve $\epsilon$-LDP since the LDP mechanisms tend to inject exetemely larger noise than those of CDP.
In addition, we also need to consider collusions between the server and the client.
This paper tackles these challenges to realize the way to measure the lower bounds of LDP in FL, and demonstrates the measured $\epsilon$ ranging from the benign to the worst-case in Figure \ref{fig:summary}.

As a result of our solutions, this paper shows the following four facts:
(a) adversaries who directly manipulate raw gradients (i.e., \textit{white-box} distinguisher) can reach the theoretical upper bounds given by the privacy parameter $\epsilon$,
%(b) even in white-box scenarios, adversaries who only access the input data and parameters but not the raw gradient have limited capability,
(b) even in white-box scenarios, the crafters with accessing only the input data and parameters but not the raw gradient have limited capability,
%(c) adversaries who only refer to the updated model but not any randomized gradient (i.e., black-box distinguisher) have limited capability,
% (c) The distinguishers with referring only to the updated model but not any randomized gradient (i.e., \textit{black-box} distinguisher) have limited capability,
(c) the distinguishers with referring only to the updated model but not any randomized gradient (i.e., \textit{black-box} distinguisher) have limited capability,
(d) collusion with the server, like malicious pre-training, increases the adversary's capability.
See the details in section \ref{sec-exp}.

\textbf{Contributions.}
We here summarize our contributions as following three claims:
\begin{enumerate}[nosep]
    \item We establish the empirical privacy measurement test of LDP in distributed learning environments.
    \item We introduce six types of adversary instantiations that helps practitioners interpret the privacy parameter $\epsilon$. We also seek the worst-case attack in FL using LDP-SGD.
    \item We demonstrate the empirical privacy levels (i.e., lower bounds of LDP) at various attack surfaces defined by the instantiated adversaries.
\end{enumerate}
This study enables us to discuss appropriate privacy levels and their relaxations between the (unrealistic) worst-case scenario and the other weak adversaries.
%Recent studies about FL under LDP often introduce a trusted entity, such as a secure aggregator and shuffler, to relax the clients' privacy and increase utility.
%We also discuss the feasibility of the relaxation with the trusted entity at the later part.% of this paper.
At the end, we also discuss the feasibility of the relaxation with introducing a trusted entity like secure aggregator.

\begin{figure}[t]
     \centering
     \begin{subfigure}[t]{.6\linewidth}
     \includegraphics[width=\textwidth]{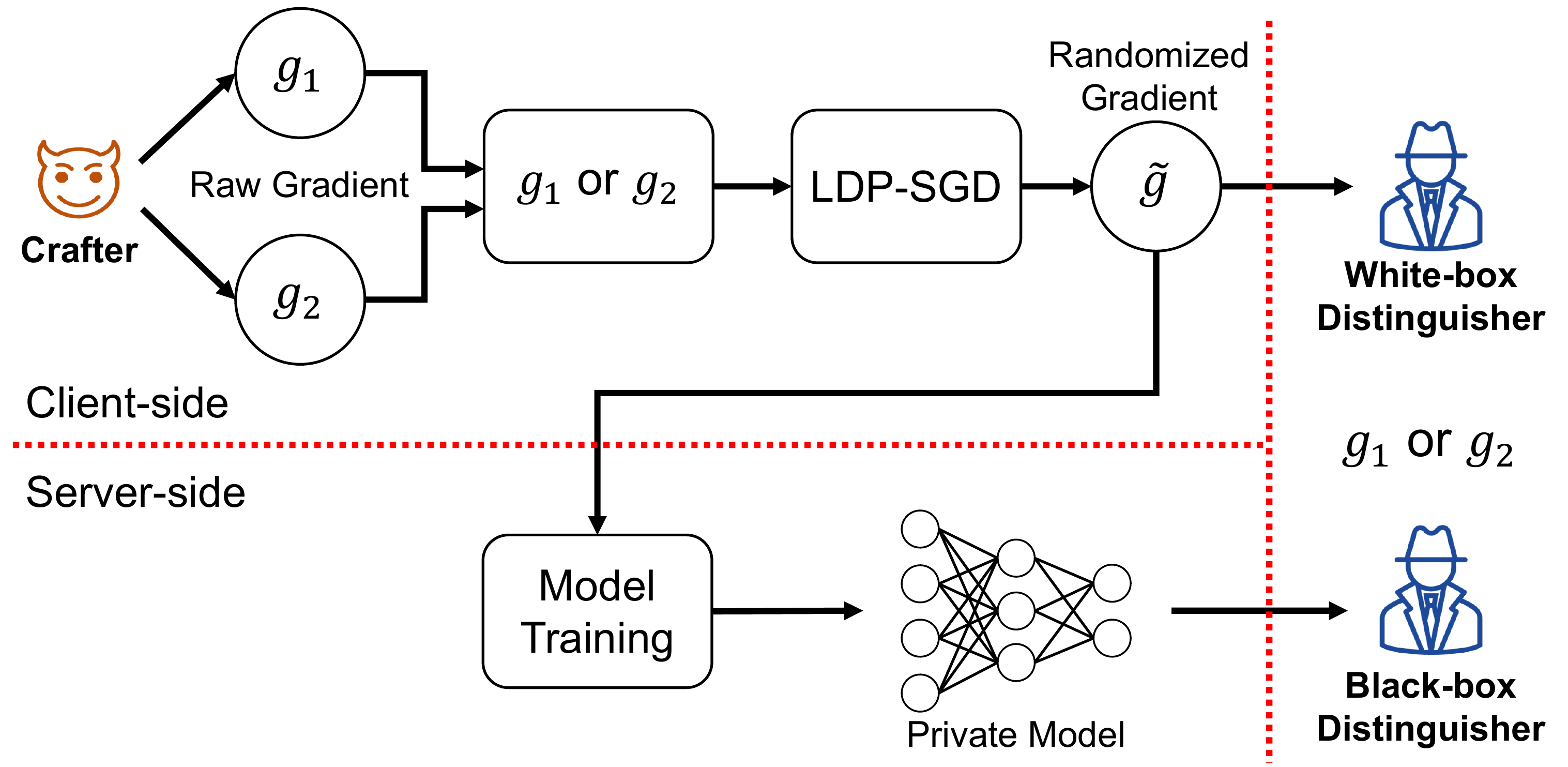}
     \caption{Our privacy measuring test in federated learning.}
     \label{fig:overview}
     \end{subfigure}
     \hfill
     \begin{subfigure}[t]{.35\linewidth}
         \centering
     \includegraphics[width=\textwidth]{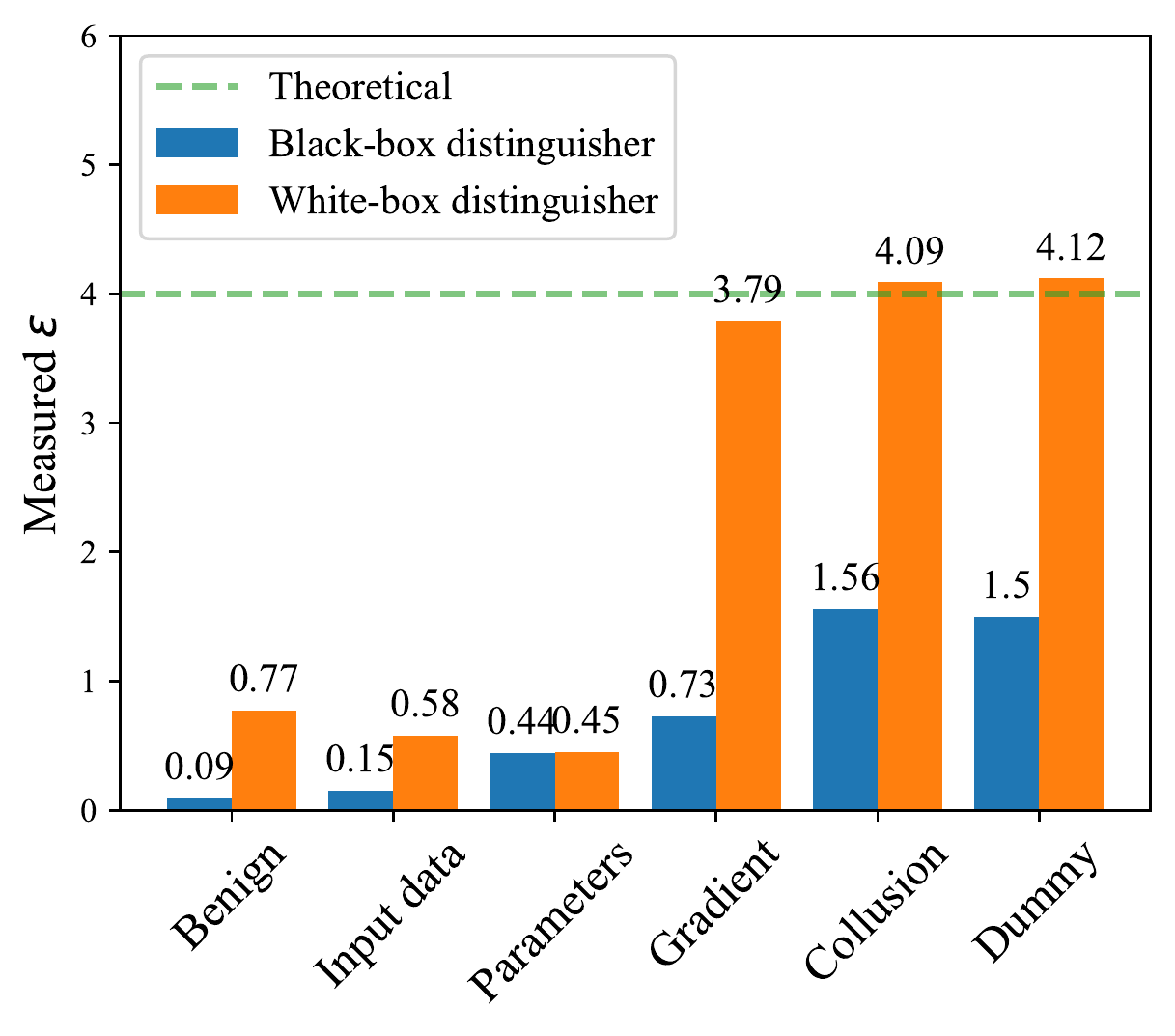}
     \caption{Measured privacy.}
     \label{fig:summary}
     \end{subfigure}
     \caption{Our (a) privacy test demonstrates (b) empirical LDP against six instantiated adversaries. The test consists of \textbf{crafter}, which generates (malicious) inputs, and \textbf{distinguisher}, which infers the input.
     The measured privacy is close to the theoretical bound given by $\epsilon$ when the adversary directly manipulates raw gradients (i.e., \textit{white-box} distinguisher), resulting in far from the bound otherwise.}
     %strong privacy regimes (i.e., far from the bound) otherwise.}
\end{figure}

\textbf{Related work.}
% \citep{liu2019investigating,236254,auditing-DP,adversary-instantiation} instantiated adversaries in machine learning under central differential privacy (CDP).
\citep{liu2019investigating,236254,auditing-DP,adversary-instantiation} instantiated adversaries in machine learning under CDP.
Liu et al.\cite{liu2019investigating} proposed an interpretation of the CDP by means of a hypothesis test.
An adversary predicts whether the input data is $D$ or $D'$ from the output.
They quantified the success of the hypothesis test using precision-recall-relation.
%Jayaraman et al.\citep{236254} evaluated privacy risk by the probability of successful membership inference and attribute inference.
Jagielski et al.~\citep{auditing-DP} were the first to attempt an empirical privacy measure using a hypothesis test.
They analyzed the privacy level of training models privacy-protected with DP-SGD~\citep{dp-sgd-original,bassily2014private, song2013stochastic} against membership inference~\citep{paper-MI} and two poisoning attacks~\citep{gu2017badnets,auditing-DP}.
Nasr et al.~\citep{adversary-instantiation} showed that theoretical and empirical $\epsilon$ are tight when a contaminated database is used for training in DP-SGD.
As mentioned above, we follow the abstract framework proposed in \citep{adversary-instantiation} but address further challenging issues to realize how to measure the lower bounds of LDP in FL.
ML Privacy Meter \citep{murakonda2020ml} verifies whether a mechanism guarantees privacy, while our study shows how a mechanism that already guarantees $\epsilon$-LDP can withstand attacks.

\section{Preliminaries} \label{sec-prelim}
This section introduces essential knowledge for understanding our
proposal.
We first describe local differential privacy.
Then, we introduce LDP-SGD, a local privacy mechanism in distributed learning.

\subsection{Local Differential Privacy}
\begin{dfn}[$\epsilon$-Local Differential Privacy]
\label{def-ldp}
A randomized algorithm $\mathcal{M}$ satisfies $\epsilon$-local differential privacy, if and only if for any pair of inputs $x, x' \in D$ and for any possible output
$S \in Range(\mathcal{M})$:
\begin{equation}
\text{Pr}[\mathcal{M}(x) \in S] \leq e^\epsilon \cdot \text{Pr}[\mathcal{M}(x') \in S].
\end{equation}
\end{dfn}
Intuitively, we only obtain the indistinguishable information from $\mathcal{M}$ even when the inputs differ.

\begin{figure}[t]
    \begin{minipage}[t]{.49\linewidth}
        \begin{algorithm}[H]
    \caption{LDP-SGD; client-side $\mathcal{A}_{\text{client}}$ \cite{ldp-sgd-google}}
    \label{alg-ldp-sgd}
    \begin{algorithmic}[1]
%    \REQUIRE Local privacy parameter: $\epsilon_\ell$, current model: $\theta_t \in \mathbb{R}^d$, $\ell_2$-clipping norm: $L$
    \REQUIRE Local privacy parameter: $\epsilon$, current model: $\theta_t \in \mathbb{R}^d$, $\ell_2$-clipping norm: $L$
    \STATE Compute clipped gradient \label{ldp-sgd-clip}\\
    $x \leftarrow \nabla\ell(\theta_t; d)\cdot \min \left \{1, \dfrac{L}{||\nabla\ell(\theta_t;d)||_2}\right\} $
    \STATE
        $z \leftarrow
 \begin{cases}
    L\cdot \frac{x}{||x||_2} &  \textrm{w.p. }  \frac{1}{2} + \frac{||x||_2}{2L}\\
    -L\cdot \frac{x}{||x||_2}& \textrm{otherwise.}
\end{cases}$

    \STATE Sample $v \sim _u S^d$, the unit sphere in $d$ dims \label{ldp-sgd-uniform}\\
    $\hat{z} \leftarrow
        \begin{cases}
            \text{sgn}(\langle z, v \rangle)\cdot v &  \textrm{w.p. } \frac{e^{\epsilon}}{1+e^{\epsilon}} \\
            -\text{sgn}(\langle z, v\rangle)\cdot v & \textrm{otherwise.}
        \end{cases}$ \label{alg:line3}

    \RETURN $\hat{z}$
    \end{algorithmic}
\end{algorithm}
    \end{minipage}
    \hfill
    \begin{minipage}[t]{.49\linewidth}
        \begin{algorithm}[H]
    \caption{LDP-SGD; server-side $\mathcal{A}_{\text{server}}$ \cite{ldp-sgd-google}}
    \label{alg-ldp-sgd-s}
    \begin{algorithmic}[1]
    %\REQUIRE Local privacy budget per epoch: $\epsilon_\ell$, number of epochs: $T$, parameter set: $C$
    \REQUIRE Local privacy budget: $\epsilon$, number of epochs: $T$, parameter set: $C$
    \STATE $\theta_0 \leftarrow \{0\}^d$
    \FOR{$t \in [T]$}
    \STATE Send $\theta_t$ to all clients
%    \STATE Collect shuffled responses ($\hat{z_i}$)$_{i \in [n]}$
    %\STATE Noisy gradient: $g_t \leftarrow
    \STATE $g_t \leftarrow
    \frac{L\sqrt{\pi}}{2} \cdot \frac{\Gamma(\frac{d-1}{2}+1)}{\Gamma(\frac{d}{2}+1)}\cdot
    \frac{e^{\epsilon}+1}{e^{\epsilon}-1}
    \left(\frac{1}{n}{\sum_{i\in[n]} \hat{z}_i }\right)$
    \STATE Update: ${\theta_{t+1} \leftarrow \prod_C(\theta_t - \eta_t \cdot g_t)}$,\\
    where $\prod_C(\cdot)$ is the $\ell_2$-projection onto set $C$, and $\eta_t = \frac{||C||_2\sqrt{n}}{L\sqrt{d}}\cdot \frac{e^{\epsilon} - 1}{e^{\epsilon} + 1}
    $
    \ENDFOR
    \RETURN $\theta_{priv} \leftarrow \theta_T$
    \end{algorithmic}
\end{algorithm}
    \end{minipage}
\end{figure}

\subsection{Federated Learning using LDP-SGD}
\label{sec:ldp-sgd}

Federated learning (FL) \citep{mcmahan2017communication,kairouz2021advances} is a decentralized machine learning technique.
The major difference between traditional machine learning and FL is that client does not share her data with servers or other clients.
However, in FL, it is pointed out that the images used in training the model can be restored from their gradients released by clients \citep{paper-inverting-grad,paper-cvpr-gradinversion}.
One way to prevent the restoration is to randomize gradients for satisfying LDP.

As shown in Figure. \ref{fig:overview}, FL assumed in this paper consists of an untrusted model trainer (server) and clients that own sensitive data.
First, the client creates the gradient with the parameters distributed by the model trainer.
The client then randomizes the gradient under LDP and sends it to the model trainer.
The model trainer updates the parameters with the gradient collected from the client.
%We randomize the gradient using LDP-SGD (locally differentially private stochastic gradient descent) \citep{duchi2018,ldp-sgd-google} as described in Algorithm \ref{alg-ldp-sgd} and Algorithm \ref{alg-ldp-sgd-s}.
We randomize the gradient using LDP-SGD (locally differentially private stochastic gradient descent) \citep{duchi2018,ldp-sgd-google} as described in Algorithm \ref{alg-ldp-sgd} and \ref{alg-ldp-sgd-s}.
The client side algorithm (Algorithm \ref{alg-ldp-sgd}) performs two gradient randomizations at line 2 and 3.
We refer to the line 2 as \textbf{gradient norm projection} and the line 3 as \textbf{random gradient sampling}.
%At line 2, it project the input gradient into the gradient whose $\ell_2$-norm is $L$ with flipping its direction randomly.
%We refer the gradient randomization as \textbf{gradient norm projection}.
%At line 3, it randomly samples a gradient from the $d$ dimensinal unit sphere with random gradient flipping.
%We refer the gradient randomization as \textbf{random gradient sampling}.
%Each step has the following characteristics:
We instantiate the adversary based on the following two facts:
\begin{itemize}[nosep]
    \item \textbf{Gradient norm projection} (line 2): the smaller the norm of the gradient before randomization, the more likely the sign of the gradient is reversed.
    \item \textbf{Random gradient sampling} (line 3): the gradient close to the gradient before randomization is likely to be generated when the privacy parameter $\epsilon$ is set large.
\end{itemize}
%This paper instantiates adversaries with considering these facts.

%The gradient norm projection has the characteristic that \textit{the smaller the norm of the gradient before randomization, the more likely the sign of the gradient is reversed.}
%The random gradient sampling has the characteristic that \textit{the gradient close to the gradient before randomization is likely to be generated when the privacy parameter $\epsilon$ is set large.}

%On the client side, LDP-SGD performs two randomized gradient flippings described in line 2 and 3 of Algorithm \ref{alg-ldp-sgd}.
%In these two randomized gradient flippings, the important characteristics of LDP-SGD in this study are two-fold.
%a) The smaller the norm of the gradient before randomization, the more likely the sign of the gradient is reversed.
%b) If the privacy parameter $\epsilon$ is set large, the gradient close to the gradient before randomization is likely to be generated.

\section{Measuring Lower Bounds of LDP}
This section describes how to measure the empirical privacy level of FL under LDP.

\subsection{Lower Bounding {$\epsilon$} of LDP as Hypothsis Testing}
%Given a output $y$ of a randomization mechanism $\mathcal{M}$, consider the following hypothesis testing experiment.
Given a output $y$ of a randomized mechanism $\mathcal{M}$, consider the following hypothesis testing experiment.
We choose a null hypothesis as input $x$ and alternative hypothesis as $x'$:
\begin{center}
$H_0$: $y$ came from a input $x$\\
$H_1$: $y$ came from a input $x'$
\end{center}
For a choice of a rejection region $S$, the false positive rate(FP), when the null hypothesis is true but rejected, is defined as \Prob{$\mathcal{M}(x) \in S$}.
The false negative rate(FN), when the null hypothesis is false but retained, is defined
as \Prob{$\mathcal{M}(x') \in \bar{S}$}
where $\bar{S}$ is the complement of $S$.
Mechanism $\mathcal{M}$ satisfying $\epsilon$-LDP is equivalent to satisfying the following conditions\citep{paper-kairouz}.

\begin{thm}[Empirical $\epsilon$-Local Differential Privacy]
For any $\epsilon \in \mathbb{R}^+$, a randomized mechanism $\mathcal{M}$ is $\epsilon$-differentially private if and only if the following conditions are satisfied for all pairs of input values $x$ and
$x'$, and all rejection region $S\subseteq Range(\mathcal{M})$:
\begin{equation}
\begin{split}
    \text{Pr}[\mathcal{M}(x) \in S] + e^\epsilon \cdot\text{Pr}[\mathcal{M}(x') \in \bar{S}] &\geq 1, and \\
    e^\epsilon \cdot\text{Pr}[\mathcal{M}(x) \in S] + \text{Pr}[\mathcal{M}(x') \in \bar{S}] &\geq 1.
\end{split}
\end{equation}
\end{thm}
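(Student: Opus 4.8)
The plan is to strip the two inequalities down to the ordinary ``likelihood-ratio'' form of Definition~\ref{def-ldp} by an elementary rearrangement, and then to use the fact that the statement quantifies over \emph{all} ordered pairs $(x,x')$ and \emph{all} rejection regions $S$: swapping the two inputs, and replacing $S$ by its complement $\bar S$ (a bijection of $Range(\mathcal M)$), is exactly what turns one of the two inequalities into the other.

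Fix an ordered pair $x,x'\in D$ and a region $S\subseteq Range(\mathcal M)$, and abbreviate $p:=\Pr[\mathcal M(x)\in S]$ and $q:=\Pr[\mathcal M(x')\in S]$, so that $\Pr[\mathcal M(x')\in\bar S]=1-q$ and $\Pr[\mathcal M(x)\in\bar S]=1-p$. A one-line manipulation gives
\[
p+e^\epsilon(1-q)\ge 1 \;\iff\; 1-p\le e^\epsilon(1-q) \;\iff\; \Pr[\mathcal M(x)\in\bar S]\le e^\epsilon\,\Pr[\mathcal M(x')\in\bar S],
\]
and likewise
\[
e^\epsilon p+(1-q)\ge 1 \;\iff\; q\le e^\epsilon p \;\iff\; \Pr[\mathcal M(x')\in S]\le e^\epsilon\,\Pr[\mathcal M(x)\in S].
\]
So, for this fixed pair and region, the two displayed conditions are precisely the two ``ratio'' bounds on the right-hand sides.

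For the ``only if'' direction, assume $\mathcal M$ is $\epsilon$-LDP, i.e.\ $\Pr[\mathcal M(a)\in T]\le e^\epsilon\Pr[\mathcal M(b)\in T]$ for every ordered pair $a,b$ and every region $T$. Instantiating this at $(a,b,T)=(x',x,S)$ yields the second ratio bound, and at $(a,b,T)=(x,x',\bar S)$ yields the first; since $x,x',S$ were arbitrary, both conditions hold in full generality. For the ``if'' direction, assume both conditions hold for all ordered pairs and all regions. Given arbitrary $a,b,T$, apply the second condition to the pair $(b,a)$ with region $T$: it reads $\Pr[\mathcal M(a)\in T]\le e^\epsilon\Pr[\mathcal M(b)\in T]$, which is exactly $\epsilon$-LDP. (Using the first condition on $(a,b)$ with region $\bar T$ works equally well.)

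The only point that needs care — and the nearest thing to an obstacle — is the bookkeeping of the quantifiers: the equivalence is between the universally quantified statements, not a pointwise one, because for a single pair $(x,x')$ each ratio bound constrains $\mathcal M$ in just one direction, and it is the freedom to relabel $x\leftrightarrow x'$ (and $S\leftrightarrow\bar S$) that supplies the converse direction. Everything else is the two rearrangements above. It is worth recording the consequence that makes the theorem usable for the empirical test: for a fixed crafter pair $(g_1,g_2)$ and a fixed distinguisher region $S$, solving the two inequalities for $\epsilon$ gives $\epsilon\ge\max\!\bigl\{\ln\tfrac{1-\mathrm{FN}}{\mathrm{FP}},\,\ln\tfrac{1-\mathrm{FP}}{\mathrm{FN}}\bigr\}$, where $\mathrm{FP}=\Pr[\mathcal M(g_1)\in S]$ and $\mathrm{FN}=\Pr[\mathcal M(g_2)\in\bar S]$.
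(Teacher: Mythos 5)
Your proof is correct: the rearrangement of each inequality into a likelihood-ratio bound, together with the observation that swapping $(x,x')$ and replacing $S$ by $\bar S$ under the universal quantifiers yields the converse bounds, is a complete argument, and your closing remark correctly recovers Equation~(\ref{eq-empirical-eps}). Note that the paper itself does not prove this theorem---it imports it from Kairouz et al.~\citep{paper-kairouz}---and your elementary derivation is essentially the standard argument behind that citation, so there is nothing to reconcile beyond the (correctly handled) quantifier bookkeeping.
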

Therefore, we can determine the empirical $\epsilon$-local differential privacy as
\begin{equation}
\label{eq-empirical-eps}
\epsilon_{\text{empirical}} = \max \left ({\log\dfrac{1-\text{FP}}{\text{FN}}, \log\dfrac{1-{\text{FN}}}{\text{FP}}} \right)
\end{equation}

For example, in 100 trials, suppose the false-positive rate, when the actual input was $x$ but the distinguisher guesses $x'$, was 0.1, and the false-negative rate, when the actual input was $x'$ but the distinguisher guesses $x$, was 0.2.
Substitute in FP and FN in Equation \ref{eq-empirical-eps} to obtain $ \epsilon_{\text{empirical}} \fallingdotseq 2.0 $.

\subsection{Instantiating the LDP Adversary in FL}
To perform the privacy test based on the above hypothesis testing, we define the following entities:
\begin{itemize}[nosep]
    \item \textbf{Crafter} produces two gradients, $g_1$ and $g_2$, with the global model $\theta_t$, corresponding to a malicious client in FL. This entity honestly randomizes one of gradients by Algorithm \ref{alg-ldp-sgd} and makes it $\tilde{g}$.
    $\tilde{g}$ is sent to the model trainer and distinguisher.
    \item \textbf{Model trainer} uses the $\tilde{g}$ received from the crafter to update the global model $\theta_t$ to $\theta_{t+1}$ by Algorithm \ref{alg-ldp-sgd-s}. This entity corresponds to a untrusted server in FL.
    \item \textbf{Distinguisher} predicts whether the randomized gradient was $g_1$ or $g_2$. This entity has the data $x_1$, $x_2$ that the crafter used to generate the gradient.
\end{itemize}

We divide the distinguisher into two classes, \textbf{black-box distinguisher} and \textbf{white-box distinguisher}.
For each distinguisher, we propose two different privacy measurement tests,  namely, \textbf{black-box LDP test} (Algorithm \ref{alg-ldp-test-black}) and \textbf{white-box LDP test} (Algorithm \ref{alg-ldp-test-white}).

\begin{figure}[t]
    \begin{minipage}[t]{.49\linewidth}
        \begin{algorithm}[H]
    \caption{Black-box LDP Test in FL}
    \label{alg-ldp-test-black}
    \begin{algorithmic}[1]
    \REQUIRE Privacy parameter: $\epsilon$, $\#$trials: $K$
    \FOR{$k \in [K]$}
    \STATE Model trainer sends $\theta_t$ to crafter.
    \STATE \textbf{Crafter}
    \STATE \hskip1.0em $\{g_1, g_2\} \leftarrow \text{Craft}(x_1, x_2, \theta_t)$
    %\STATE craft $g_1$ and $g_2$ with referring $x_1$, $x_2$, $\theta_t$.
    \STATE \hskip1.0em Randomly choose $g$ from $\{g_1, g_2\}$.
    % \STATE \hskip1.0em $\Tilde{g} \leftarrow \text{LR}_\epsilon(g)$.
    \STATE \hskip1.0em $\Tilde{g} \leftarrow \mathcal{A}_{\text{client}}(g)$.
    \STATE \hskip1.0em Submit $\Tilde{g}$ to model trainer.
    % \STATE \hskip1.0em Share $g_1$, $g_2$, $x_1$, $x_2$, $\theta_t$ w/ distinguisher
    \STATE \hskip1.0em Share $x_1$, $x_2$, $\theta_t$ w/ distinguisher
    \STATE \textbf{Model Trainer}
    % \STATE \hskip1.0em $\theta_{t+1}\leftarrow \theta_t - \eta \Tilde{g}$
    \STATE \hskip1.0em $\theta_{t+1}\leftarrow \mathcal{A}_{\text{server}}(\theta_{t})$
    % \STATE \hskip1.0em $\theta_{t+1}\leftarrow \mathcal{A}_{server}(\theta_{t})$
    \STATE \hskip1.0em Submit $\theta_{t+1}$ to distinguisher
    \STATE \textbf{Distinguisher}
    % \STATE \hskip1.0em $\text{guess} \leftarrow \mathcal{D}_{\text{black}}(\theta_{t+1},\theta_t, g_1,g_2,x_1,x_2)$
    \STATE \hskip1.0em $\text{guess} \leftarrow \mathcal{D}_{\text{black}}(\theta_{t+1},\theta_t,x_1,x_2)$
    \ENDFOR
    \STATE Compute $\epsilon_{\text{empirical}}$ as (\ref{eq-empirical-eps})
    %\RETURN $\hat{z}$
    \end{algorithmic}
\end{algorithm}
    \end{minipage}
    \hfill
    \begin{minipage}[t]{.49\linewidth}
        \begin{algorithm}[H]
    \caption{White-box LDP Test in FL}
    \label{alg-ldp-test-white}
    \begin{algorithmic}[1]
    \REQUIRE Privacy parameter: $\epsilon$, $\#$trials: $K$
    \FOR{$k \in [K]$}
    \STATE Model trainer sends $\theta_t$ to crafter.
    \STATE \textbf{Crafter}
    \STATE \hskip1.0em $\{g_1, g_2\} \leftarrow \text{Craft}(x_1, x_2, \theta_t)$
    \STATE \hskip1.0em Randomly choose $g$ from $\{g_1, g_2\}$
    % \STATE \hskip1.0em $\Tilde{g} \leftarrow \text{LR}_\epsilon(g)$.
    \STATE \hskip1.0em $\Tilde{g} \leftarrow \mathcal{A}_{\text{client}}(g)$.
    \STATE \hskip1.0em Submit $\Tilde{g}$ to distinguisher.
    \STATE \hskip1.0em Share $g_1$, $g_2$ w/ distinguisher.
    \STATE \textbf{Distinguisher}
    \STATE \hskip1.0em $\text{guess} \leftarrow \mathcal{D}_{\text{white}}(\Tilde{g},g_1,g_2)$
    \ENDFOR
    \STATE Compute $\epsilon_{\text{empirical}}$ as (\ref{eq-empirical-eps})
    \end{algorithmic}
\end{algorithm}
    \end{minipage}
\end{figure}

\subsection{Dinstinguisher Algorithms}

\paragraph{Black-box Distinguisher.}
Let black-box distinguisher $\mathcal{D}_{\text{black}}$ be the distinguisher that cannot access the randomized gradient, but can access the global parameters $\theta_{t+1}$ updated with using it.
That is, the black-box disntinguisher only observes the updates after each step of federated learning, but not see any details of inner process of the federated learning.
%\textbf{Black-box Distinguisher} $\mathcal{D}_{\text{black}}$ can not access the randomized gradient, but can access the updated parameters with using it.
For simplicity's sake, we only consider the global model is updated using a single randomized gradient.
The black-box distinguisher predicts the randomized gradient by computing the loss of $x_1$ (and $x_2$), then compare the differences before and after the model update.
By default, the distinguisher $\mathcal{D}_{\text{black}}$ employs the following inference against two choices:
%For the other crafters, which maliciously produce two gradients, $\mathcal{D}_{\text{black}}$ employs the following inference:
\begin{equation}
\label{eq_weak_2}
    guess = \begin{cases}
             {g_1}& f(x_1;\theta_{t+1}) \leq f(x_1;\theta_t)   \\
            {g_2}& otherwise
        \end{cases}.
\end{equation}
This is based on that $g_2$ aims to degrade the model.
However, the inference does not work well for all crafters.
In section \ref{sec-crafter}, we mention about the replacement for the inequality.

\paragraph{White-box Distinguisher.}
White-box distinguisher $\mathcal{D}_{\text{white}}$ directly accesses a single randomized gradient $\tilde{g}$.
In addition, the white-box distinguisher can generate two raw gradients $g_1$ and $g_2$ as well as the client, then computes the similarity between $\tilde{g}$ and the raw gradients to estimate the origin of the $\tilde{g}$.
%Predict the gradient with the higher similarity to $\tilde{g}$ as a randomized gradient.
Here, we employ the cosine similarity between $\tilde{g}$ and $\{g_1, g_2\}$.
The discrimination by the white-box distinguisher is defined as follows:
\begin{equation}
\label{eq_strong_1}
%    g^* = \argmax_{g \in \{g_1, g_2\}} \dfrac{\tilde{g}\cdot g}{|\tilde{g}||g|} \\
    guess = \begin{cases}
             {g_1}& cos(\tilde{g}, g_1) \geq cos(\tilde{g}, g_2) \\
%             g^* = g_1   \\
            {g_2}& otherwise
        \end{cases}.
\end{equation}

\subsection{Crafter Algorithms} \label{sec-crafter}
We propose several types of adversaries depending on the access level.
We introduce six types of crafter algorithms Craft($\cdot$) in Algorithm \ref{alg-ldp-test-black} and \ref{alg-ldp-test-white}.
One crafter algorithm assumes collusion with the model trainer.
The adversary setting proposed in this paper is summarized in Table \ref{table:setting}.

\begin{table}[t]
 \caption{\textbf{Adversary settings.}
 The lower the row, the stronger (i.e. more unrealistic) the adversary.}
 \label{table:setting}
 \centering
 \small
\begin{tabular}{lcccc}
\toprule
\multirow{2}{*}[-2pt]{Adversary}&\multicolumn{2}{c}{Malicious Access}&\multicolumn{2}{c}{Distinguisher} \\\cmidrule(lr){2-3}\cmidrule(lr){4-5}
  & Crafter & Model trainer & Black-box & White-box \\ \midrule
 Benign &Benign. & Benign. & Eq. (\ref{eq_weak_1})&Eq. (\ref{eq_strong_1})\\
 Input perturbation &Images. & Benign. & Eq. (\ref{eq_weak_2})&Eq. (\ref{eq_strong_1})\\
 Parameter retrogression &Parameters. & Benign. & Eq. (\ref{eq_weak_2})&Eq. (\ref{eq_strong_1})\\
Gradient flip & Gradients. & Benign. & Eq. (\ref{eq_weak_2})&Eq. (\ref{eq_strong_1})\\
 Collusion
 &Gradients & Parameters. &Eq. (\ref{eq_weak_2})&Eq. (\ref{eq_strong_1})\\
 Dummy gradient  & Gradients.& Benign.&Eq. (\ref{eq_weak_3})&Eq. (\ref{eq_strong_1})\\
 \bottomrule
\end{tabular}
\end{table}

\textbf{Benign setting.}
As the most realistic setting, suppose all entities are benign.
In the benign setting, the crafter uses the global model $\theta_t$ distributed by the model trainer to generate gradients $g_1$ and $g_2$ from the images $x_1$ and $x_2$.
Thus, the two gradients are also crafted without malicious behaviors as:
\begin{eqnarray*}
        g_1 = \nabla f(x_1, \theta_t); &
        g_2 = \nabla f(x_2, \theta_t)
\end{eqnarray*}
%The crafter then randomizes either $g_1$ or $g_2$ and send it.
% In this scenario, the black-box distinguisher employs the following inequality: %the inequality $\delta(x_1) \geq \delta(x_2)$ where $\delta(x)=|f(x;\theta_{t+1}) - f(x;\theta_t)|$.
In this scenario, the black-box distinguisher considers the reported randomized gradient as $g_1$ if
\begin{equation} \label{eq_weak_1}
    \delta(x_1) \geq \delta(x_2) \ \text{where} \ \delta(x)=|f(x;\theta_{t+1}) - f(x;\theta_t)| .
\end{equation}
otherwise $g_2$.

\textbf{Input perturbation.} This scenario assumes that the crafter produces malicious input data via perturbations.
The crafter adds perturbation to the data $x_1$ to make it easier to distinguish the gradient.
Although many studies have proposed perturbation algorithms \citep{papernot2016limitations,moosavi2016deepfool,papernot2017practical,carlini2017adversarial,ford2019adversarial},
we adopted the FGSM (Fast Gradient Sign Method) \citep{goodfellow2014explaining}, a well-known method for adversarial perturbations.
For an input image, FGSM uses the gradients of the loss with respect to the input image to create a new image that maximizes the loss.
Let $x_2$ be an adversary pertubed image crafted as $x_2 = x_1 + \alpha \ \text{sgn}(\nabla_{x_1}f(x_1;\theta_t))$ \footnote{We set $\alpha=1$.}.

\textbf{Parameter retrogression.}
Let us assume that the crafter has maliciously manipulated the parameters~\citep{nasr2019comprehensive}.
Generating a gradient from manipulated parameters results in a significant difference in norm compared to generating it from benign parameters.
This setup suggests the ease of distinguishing between two gradients with very different norms.
Let $\theta_t'$ be the parameter updated in the direction of increasing $x_1$ loss as $\theta_t' = \theta_t + \alpha \ \nabla f(x_1;\theta_t)$.
% \begin{eqnarray*}
%     \theta_t' = \theta_t + \alpha \ \nabla f(x_1;\theta_t)
% \end{eqnarray*}
Generate gradients $g_1$ from the benign parameter $\theta_t$ and $g_2$ from the retrograded parameter $\theta_t'$.
\begin{eqnarray*}
    g_1 = \nabla f(x_1, \theta_t); &
    g_2 = \nabla f(x_1, \theta_t')
\end{eqnarray*}

\textbf{Gradient flip.}
In this setting, the crafter processes the raw gradient.
The simplest method of processing a gradient to increase its discriminability is to flip the gradient sign \citep{nasr2019comprehensive}.
As described in Section \ref{sec:ldp-sgd}, in the random gradient sampling of LDP-SGD, when $\epsilon$ is set large, the sign of the gradient is unlikely to change, so we argue that such processing is effective.
The crafter uses the global model $\theta_t$ distributed by the model trainer to generate gradients $g_1$ from the image $x_1$.
Let $g_2$ be a gradient with the sign of $g_1$ flipped.
\begin{eqnarray*}
    g_1 = \nabla f(x_1, \theta_t); & g_2 = -g_1
\end{eqnarray*}

\textbf{Collusion.}
We here consider distributing a malicious model from the server to the clients.
As explained in Section \ref{sec:ldp-sgd}, in the  gradient norm projection of LDP-SGD, the smaller the norm of the raw gradient, the easier it is for the sign to be flipped.
Taking advantage of this property, we consider a setting where the crafter and the model trainer collude to make generating a gradient with a small norm challenging.
With this setting, the model trainer intentionally creates a global model with a massive loss and distributes it to the crafter.
The crafter flips a gradient in the same way as the gradient flipping.
Thus the gradient $g_2$ is generated as $g_2 = -g_1$.
As a realization of this adversary, we introduce the following procedure. First the model trainer generates a malicious global model $\Tilde{\theta}_t$ from only images with the same label and distribute it to the crafter.
Second, the crafter utilizes the malicious model $\Tilde{\theta}_t$ to generate gradients $g_1$ from the image $x_1$.
The label of $x_1$ is different from the label of the images used to generate the malicious model.
\begin{eqnarray*}
    g_1 = \nabla f(x_1, \Tilde{\theta}_t);
     &\textrm{where } \Tilde{\theta}_t \textrm{ is a malicious model.}
\end{eqnarray*}
As well as the gradient flip, the adversary craft $g_2$ by flipping $g_1$.
% Randomize either $g_1$ or $g_2$ and send it to the model trainer as $\tilde{g}$.
% Third, the model trainer updates the global model with $\tilde{g}$ to $\theta_{t+1}$.
% Finally, the distinguisher acts.

%\textbf{Gradient filling and flip.}
\textbf{Dummy gradient.}
Here, we consider that the crafter produces a \textit{dummy} gradient.
% LDP-SGD includes the gradient flipping based on the $\ell_2$-norm of the input gradient.
% The intuition here is to craft dummy gradient that never cause the gradient flipping.
LDP-SGD includes the gradient norm projection, which causes errors due to randomly flipping gradients.
The intuition here is to craft a dummy gradient that never cause such errors.
The simplest way to do this is to generate a gradient with a large norm, regardless of the image or model the crafter has.
%We define a setting where gradient flipping owing to the norm never occurs.
%As shown in Figure \ref{fig:gradflip_maximize},
%The simplest way to do this is to generate a gradient with a large norm, regardless of the image or model the crafter has.
%
To craft the dummy gradient, we simply fill a constant $\lambda$ into all the elements of the gradient as:
\begin{eqnarray*}
    g_1 = (\lambda, \lambda, \dots, \lambda);
\end{eqnarray*}
% The norm of $g_1$ must be greater than or equal to the clipping threshold $L$ to avoid the gradient flipping.
 The norm of $g_1$ must be greater than or equal to the clipping threshold $L$ to avoid the gradient norm projection.
Therefore, let $\lambda$ be $L/\sqrt{d}$ and $d$ be the dimension of the gradient.
% Since the $\ell_2$-norm of this dummy gradient is $L$, the dummy gradient makes the first gradient flipping invalid.
%In this setting, the crafter assigns a constant $L$ to $g_1$ so that the norm is maximized.
%$L$ refers to the norm clipping threshold in the LDP-SGD algorithm.
%Let $g_2$ be a gradient with the sign of $g_1$ flipped.
For another choice of the gradient, we also employ the gradient flipping to maximize the difference against the dummy gradient.
Thus the gradient $g_2$ is generated as $g_2 = -g_1$ as well.
In this scenario, the black-box distinguisher considers the reported randomized gradient as $g_1$ if
\begin{equation}
    %\theta_{t+1} \leq \theta_t.
    \sum_{i\in [d]} \text{sgn}(\theta^i_{t+1} - \theta^i_t) \geq 0,
    \label{eq_weak_3}
\end{equation}
otherwise $g_2$. This is due to that the crafter produces the dummy gradient $g_1$ by filling the positive constant.
%where $x^* = \argmax_{x \in \{x_1, x_2\}} |f(x;\theta_{t+1}) - f(x;\theta_t)|$.
The combination of crafting the dummy gradient and flipping it is a worst-case attack for the LDP adversary.
We give a proof below.
%Based on the gradient filling, we then generate $g_2$ as the flipped gradient of $g_1$ as $g_2 = -g_1$.
%\begin{eqnarray*}
%    g_1 = (L, L, \dots, L); &
    %g_1 =  L \cdot \frac{\nabla f(x_1, \theta_t)}{||\nabla f(x_1, \theta_t)||_2}; &
%    g_2 = -g_1
%\end{eqnarray*}

%\subsection{The proof for the worst case}

\subsection{Analysis of the Worst-case Attack}

%\textbf{The proof for the worst case}.
%Let us explain why reversing the sign of a gradient with a large norm successfully distinguishes between $g_1$ and $g_2$.

In the above adversary instatiations with the crafter and the distinguisher, we can find a worst-case attack in FL using LDP-SGD.
About the worst-case, we meet the following proposition.
\begin{proposition}
The white-box distinguisher with the crafter producing a raw gradient whose $\ell_2$-norm is larger than or equal to $L$ is a worst-case attack reaching the theoretical upper bound given by $\epsilon$.
\end{proposition}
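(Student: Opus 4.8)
The plan is to pair a universal upper bound with a matching lower bound for this particular attack. For the upper bound, LDP-SGD is $\epsilon$-LDP by construction~\cite{ldp-sgd-google}, so rearranging the two inequalities of Theorem~1 into the form of Eq.~(\ref{eq-empirical-eps}) shows that \emph{every} crafter/distinguisher pair (which only induces some triple $(x,x',S)$) satisfies $\epsilon_{\text{empirical}}\le\epsilon$. It therefore suffices to exhibit one attack with $\epsilon_{\text{empirical}}=\epsilon$.

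The crux is the observation that the hypothesis $\|g_1\|_2\ge L$ makes the clipping and gradient-norm-projection steps (lines~1--2 of Algorithm~\ref{alg-ldp-sgd}) \emph{deterministic}: after line~1 the clipped vector has $\|x\|_2=L$ exactly, so in line~2 the direction-preserving branch is taken with probability $\tfrac12+\tfrac{\|x\|_2}{2L}=1$ and $z=x$, leaving the random direction and the coin of line~\ref{alg:line3} as the only source of noise in $\mathcal{A}_{\text{client}}$. I would take the companion gradient to be the flipped one, $g_2=-g_1$ (exactly the gradient-flip / dummy-gradient construction), which also has norm $\ge L$, so that the clipped--projected vectors obey $z_1=x_1=-x_2=-z_2$.

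Next I would write out the two output laws and the decision rule. Set $p=e^{\epsilon}/(1+e^{\epsilon})$. Conditioned on the sampled $v$, line~\ref{alg:line3} returns $\mathrm{sgn}(\langle z_i,v\rangle)\,v$ with probability $p$ and its negation otherwise; hence, under input $g_i$, the output lands in the hemisphere aligned with $z_i$ with probability $p$ and in the opposite hemisphere with probability $1-p$, and since $z_2=-z_1$ these two laws are mirror images across $z_1^{\perp}$. Because $g_2=-g_1$, Eq.~(\ref{eq_strong_1}) collapses to the rule ``guess $g_1$ exactly when $\langle\tilde g,g_1\rangle\ge0$'', equivalently when $\langle\tilde g,z_1\rangle\ge0$ (since $g_1$ is a positive multiple of $x_1$); discarding the probability-zero event $\langle v,z_1\rangle=0$, this event is precisely ``line~\ref{alg:line3} did not flip'' when the input is $g_1$ and ``line~\ref{alg:line3} did flip'' when the input is $g_2$. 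Thus $\text{FP}=\text{FN}=1-p$, and substituting into Eq.~(\ref{eq-empirical-eps}) gives $\epsilon_{\text{empirical}}=\log\frac{p}{1-p}=\epsilon$; combined with the upper bound above, this attack attains the maximum. As a sanity check, the dummy gradient $g_1=(\lambda,\dots,\lambda)$ with $\lambda=L/\sqrt d$ has $\|g_1\|_2=L$, so the hypothesis holds and the statement is not vacuous.

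The step I expect to need the most care is the geometric identification in the third paragraph: one should verify that the cosine-similarity rule of Eq.~(\ref{eq_strong_1}) really coincides with thresholding the (piecewise-constant) likelihood ratio $e^{\pm\epsilon}$ of the two output laws --- equivalently, that the distinguisher's decision boundary is exactly the hyperplane $z_1^{\perp}$ --- and that this boundary has probability zero under the sampling of $v$ so it can be discarded. Everything else is either inherited from the $\epsilon$-LDP guarantee of LDP-SGD or a short computation once the collapse of lines~1--2 has been noticed.
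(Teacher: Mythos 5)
Your proposal is correct and follows essentially the same route as the paper's own proof: take the flipped pair $g_2=-g_1$ with $\|g_1\|_2\ge L$ so that the norm-projection step is deterministic, observe that the cosine-sign decision of Eq.~(\ref{eq_strong_1}) coincides with the line-\ref{alg:line3} coin, and conclude a success probability of $e^{\epsilon}/(1+e^{\epsilon})$, i.e.\ $\epsilon_{\text{empirical}}=\epsilon$. Your write-up is somewhat more careful than the paper's (explicit upper-bound direction via Theorem~1, explicit FP/FN substitution into Eq.~(\ref{eq-empirical-eps}), and handling of the measure-zero boundary), but these are refinements of the same argument rather than a different approach.
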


\begin{proof}
Recall that in LDP-SGD, the direction is reversed depending on the gradient norm and $\epsilon$.
First, since the sign is easily flipped when the norm of the gradient is small in the gradient norm projection, the norm of the gradient must be greater than or equal to $L$.
Hereafter, we assume that $g$ is a gradient with norm $L$.
%Let $g_{\text{flip}}$ be the flipped gradient of $g$, $\tilde{g}$ and $\tilde{g}_{\text{flip}}$ be the outputs of Algorithm \ref{alg-ldp-sgd}, respectively.
Let $g_{\text{flip}}$ be the flipped gradient of $g$, $\tilde{g}$ and $\tilde{g}_{\text{flip}}$ be the outputs of the random gradient sampling, respectively.
With probability of $\frac{e^\epsilon}{1+e^\epsilon}$, the following equation holds on line \ref{alg:line3}.
\begin{eqnarray*}
    g\cdot\tilde{g} \geq 0 &\quad g_{\text{flip}} \cdot \tilde{g}_{\text{flip}} \geq 0
\end{eqnarray*}
% In other words, the directions of $\tilde{g}$ and $\tilde{g}_{\text{flip}}$ do not mix with the probability according to $\epsilon$.
Let $\tilde{g}_{\text{target}}$ be the gradient randomized of either $g$ or $g_{\text{flip}}$ with probability of $50\%$.
Let us predict what $g_{\text{target}}$ be.
The case can be divided by the cosine similarity between $\tilde{g}_{\text{target}}$ and $g$ as follows:
\begin{eqnarray*}
% \dfrac{\tilde{g}_{\text{target}}\cdot g}{|\tilde{g}_{\text{target}}||g|} \geq 0
%cos(\tilde{g}_{\text{target}}\cdot g) \geq 0 &\Rightarrow&
cos(\tilde{g}_{\text{target}}, g) \geq 0
&\Rightarrow&
 \begin{cases}
            g_{\text{target}} \textrm{ was } g \textrm{ and } g \textrm{ rotation was within 90 degrees.} &  \textrm{ w.p. } \frac{1}{2}\cdot\frac{e^\epsilon}{1+e^\epsilon} \\
            g_{\text{target}} \textrm{ was } g_{\text{flip}} \textrm{ and } g_{\text{flip}} \textrm{ rotation was over 90 degrees.\;\;}& \textrm{ w.p. } \frac{1}{2}\cdot\frac{1}{1+e^\epsilon}
        \end{cases}
\\
% \dfrac{\tilde{g}_{\text{target}}\cdot g}{|\tilde{g}_{\text{target}}||g|} \leq 0
%cos(\tilde{g}_{\text{target}}\cdot g) \leq 0 &\Rightarrow&
cos(\tilde{g}_{\text{target}}, g) \leq 0 &\Rightarrow&
 \begin{cases}
            g_{\text{target}} \textrm{ was } g \textrm{ and } g \textrm{ rotation was over 90 degrees.} &  \textrm{w.p. } \frac{1}{2}\cdot\frac{1}{1+e^\epsilon} \\
            g_{\text{target}} \textrm{ was } g_{\text{flip}} \textrm{ and } g_{\text{flip}} \textrm{ rotation was within 90 degrees.}& \textrm{w.p. } \frac{1}{2}\cdot\frac{e^\epsilon}{1+e^\epsilon}
        \end{cases}
\end{eqnarray*}
White-box distinguisher predicts $g$ for the original data of $\tilde{g}_{\text{target}}$ when the cosine similarity between $g$ and $\tilde{g}_{\text{target}}$ is positive, then the prediction is correct with probability $\frac{1}{2}\cdot\frac{e^\epsilon}{1+e^\epsilon}$.
Likewise, white-box distinguisher predicts $g_{\text{flip}}$ for the original data of $\tilde{g}_{\text{target}}$ when the cosine similarity between $g$ and $\tilde{g}_{\text{target}}$ is negative, then the prediction is correct with probability $\frac{1}{2}\cdot\frac{e^\epsilon}{1+e^\epsilon}$.
Therefore, white-box distinguisher can distinguish $g$ and $g_{\text{flip}}$ with probability $\frac{e^\epsilon}{1+e^\epsilon}$.
\end{proof}
%As initially assumed, if the malicious model generates a sizeable normed gradient or the crafter deliberately generates a large norm gradient, the probability that $g$ and $g_{\text{flip}}$ are distinguishable approaches $\frac{e^\epsilon}{1+e^\epsilon}$ as close as possible.
%This can be seen in Figure \ref{fig:result-grad-norm}.
%Therefore, the combination of maximizing the norm and flipping the sign is the most effective.

\section{Numerical Observations} \label{sec-exp}
%\section{Experimental Evaluation}
% \begin{figure}[t]
%     \begin{subfigure}[t]{.6\linewidth}
%         \centering
%         \includegraphics[width=\linewidth]
%             {img/benign_setting}
%         \caption{The benign crafter randomizes a gradient from two images.}
%         \label{fig:benign_setting}
%     \end{subfigure}
%     \begin{subfigure}[t]{.4\linewidth}
%         \centering
%         \includegraphics[width=.7\linewidth]
%             {img/benign_setting_mnist}
%         \caption{Empirical $\epsilon$ in benign setting.}
%         \label{fig:benign_setting_mnist}
%     \end{subfigure}
%     \caption{Benign setting.}
%     \label{fig:benign_setting_}
%   \end{figure}
Here, we address an experimental study to observe numerical results of our LDP tests in FL.
%This experimental study is designed to answer the following questions:
%\begin{enumerate}
%    \item aaa
%    \item aaa
%    \item aaa
%\end{enumerate}

\subsection{Experimental Setup}
For each adversary instantiations listed in Table \ref{table:setting}, we run ten measurements and average over these measurement results.
Each measurement consists of $K$=10,000 trials by each distinguisher.

%In five settings, we average ten measurements of $\epsilon_{\text{empirical}}$.
%Each $\epsilon_{\text{empirical}}$ consists of 10000 trials by the distinguisher to guess whether the randomized input is $g_1$ or $g_2$.

\textbf{Hyper-parameters.}
We have implemented our algorithms in Pytorch. %, a machine learning library.
For each dataset, we used cross-entropy as loss function.
We used three layers convolutional neural network for all datasets.
% To train the models, we use LDP-SGD \citep{ldp-sgd-google} with learning rate $\eta$=1, clipping norm size $L$=1.
To train the models, we use LDP-SGD \citep{ldp-sgd-google} with clipping norm size $L$=1.
Guaranteed privacy level $\epsilon$ is set to 0.5, 1.0, 2.0 and 4.0.

\textbf{Datasets.}
We perform experimental evaluations on four datasets, namely MNIST \citep{lecun-mnisthandwrittendigit-2010}, CIFAR-10 \citep{alex-cifar-10}, Fashion-MNIST \citep{xiao2017fashion} and SVHN \citep{netzer2011reading}.
Detail results of CIFAR-10, Fashion-MNIST and SVHN are in Appendix.

 \begin{figure}[t]
  \centering
  \begin{subfigure}{.3\linewidth}
    \centering
    \includegraphics[width=\linewidth]{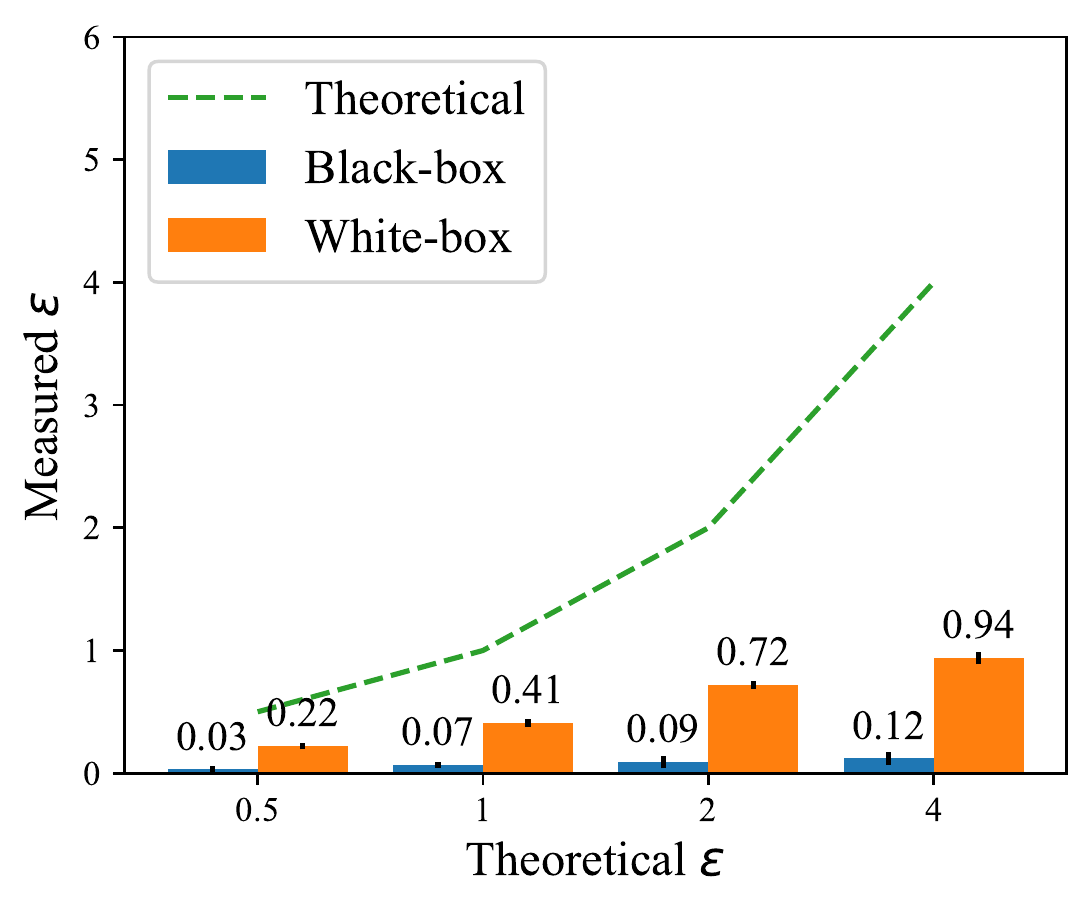}
     \caption{Benign setting}
     \label{fig:benign_setting_mnist}
  \end{subfigure}
  \hfill
  \begin{subfigure}{.3\linewidth}
    \centering
    \includegraphics[width=\linewidth]{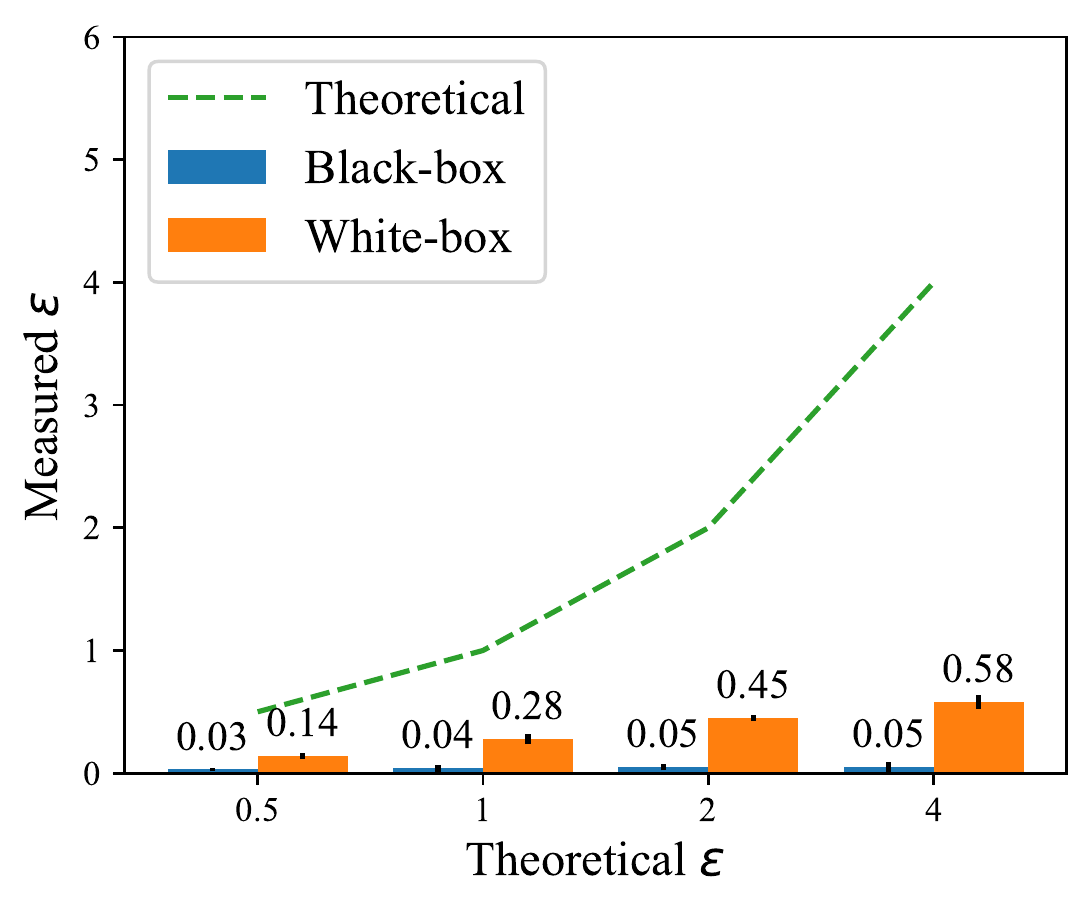}
     \caption{Image perturbation}
     \label{fig:image_perturbation_mnist}
  \end{subfigure}
  \hfill
  \begin{subfigure}{.3\linewidth}
    \centering
    \includegraphics[width=\linewidth]{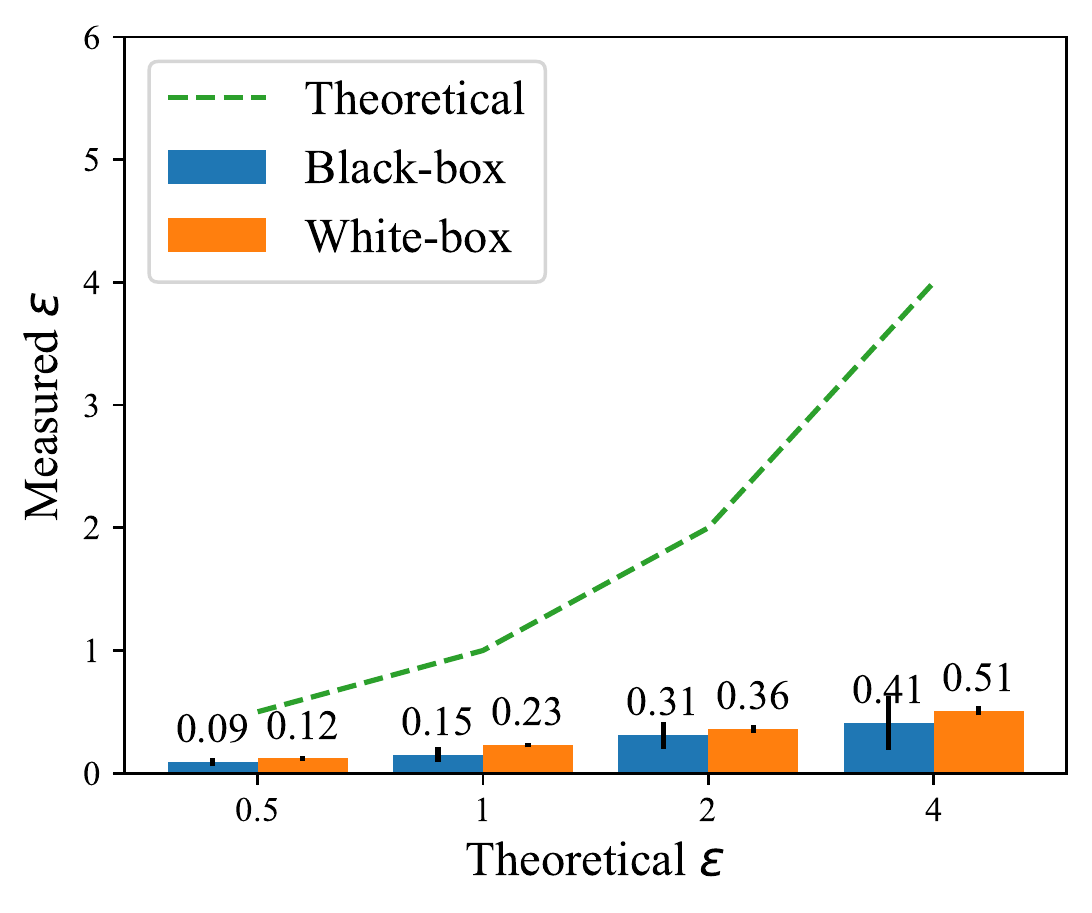}
     \caption{Parameter retrogression}
     \label{fig:param_retrogression_mnist}
  \end{subfigure}

  \begin{subfigure}{.3\linewidth}
    \centering
    \includegraphics[width=\linewidth]{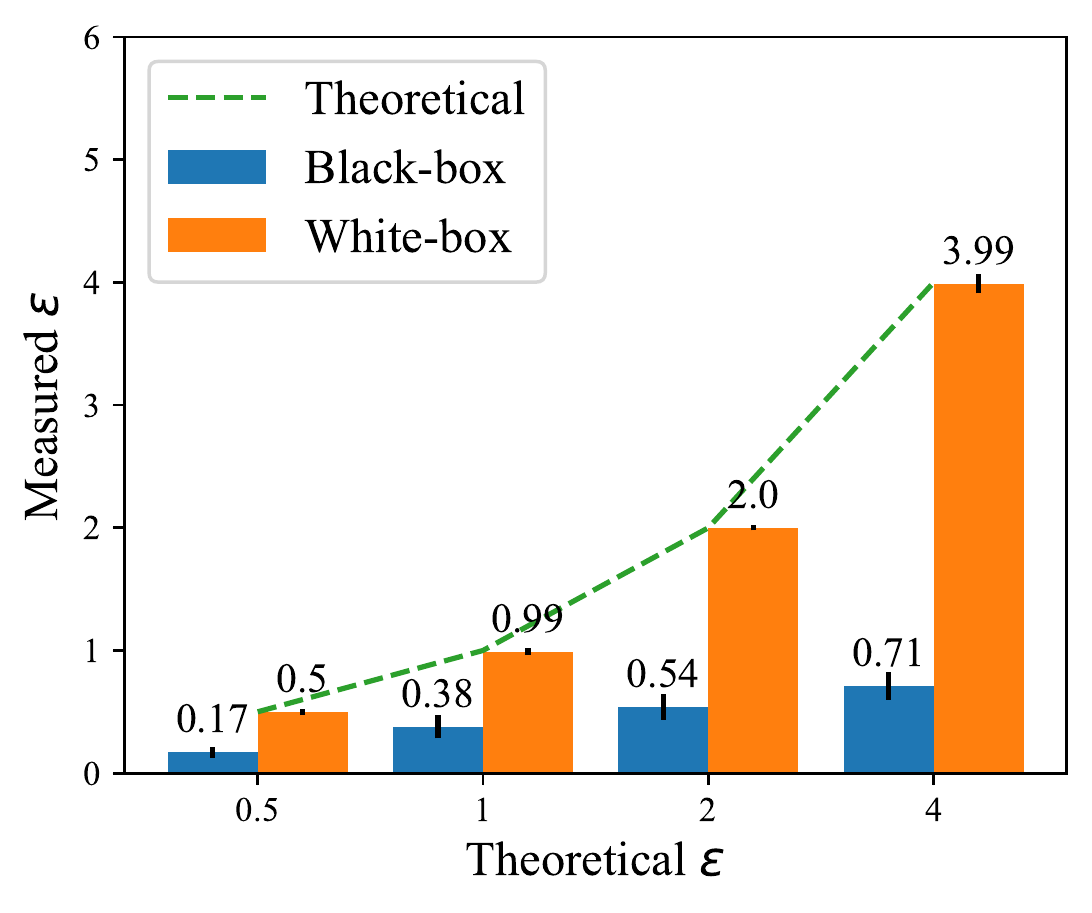}
     \caption{Gradient flip}
     \label{fig:gradflip_mnist}
  \end{subfigure}
  \hfill
  \begin{subfigure}{.3\linewidth}
    \centering
    \includegraphics[width=\linewidth]{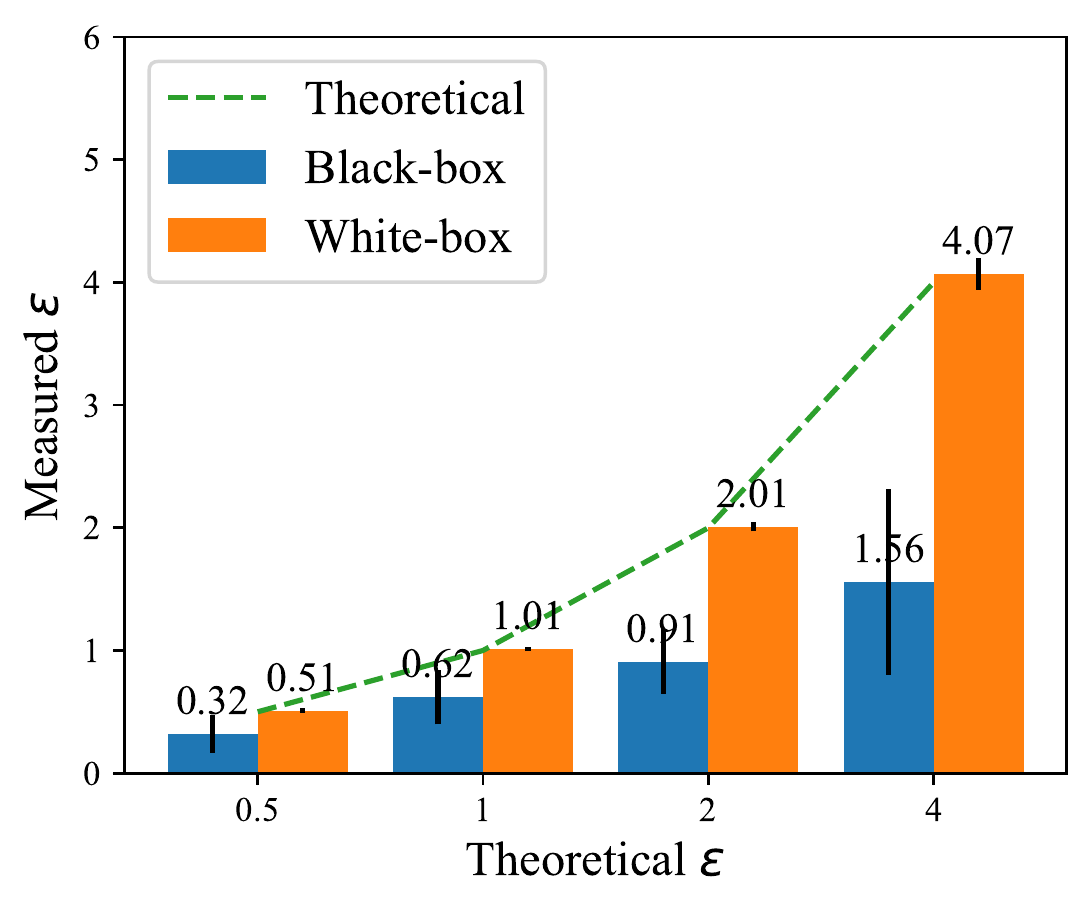}
     \caption{Collusion}
     \label{fig:gradflip_mm_mnist}
  \end{subfigure}
  \hfill
  \begin{subfigure}{.3\linewidth}
    \centering
    \includegraphics[width=\linewidth]{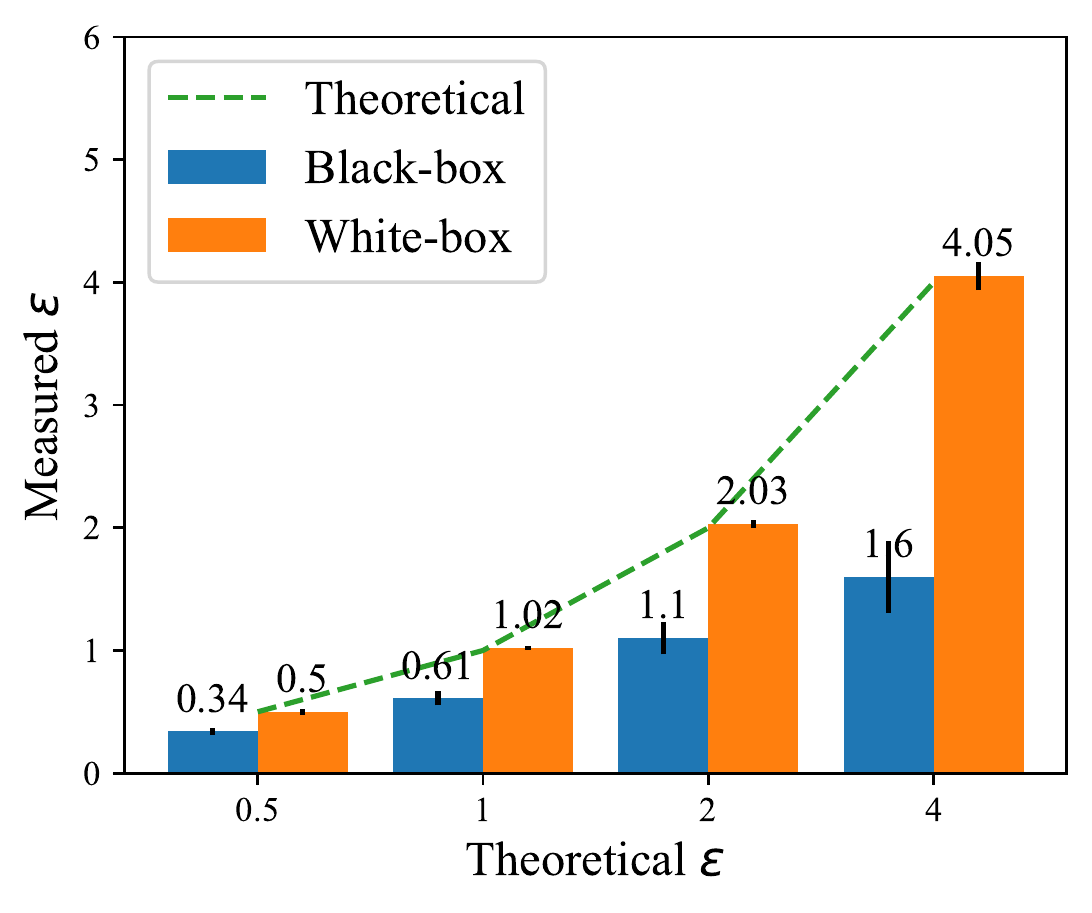}
     \caption{{Dummy gradient}}
     \label{fig:gradflip_maximize_mnist}
  \end{subfigure}

  \caption{The empirical privacy in federated learning. (MNIST)}
  \label{fig:result-mnist}
\end{figure}

%\subsection{Evaluation Results}
\subsection{Observations of Empirical LDP}
\label{sec:observation}
Figure \ref{fig:result-mnist} shows the measured $\epsilon$ ($\epsilon_{\text{empirical}}$) by our LDP test varying the privacy parameter $\epsilon$.

\textbf{Benign.}
In this most realistic setting, even if we set $\epsilon=4$, the empirical privacy strength is $0.94$, and the gap between the theoretical and the empirical is large. (Figure \ref{fig:benign_setting_mnist})
$\epsilon_{\text{empirical}}=0.94$ is equivalent to about $68.1\%$ in terms of the identifiable probability.
White-box distinguisher has a higher $\epsilon_{\text{empirical}}$ than black-box distinguisher.
Therefore it is easier to determine the randomized gradient $\tilde{g}$ itself than the parameter $\theta_{t+1}$ updated with the randomized gradient $\tilde{g}$.

\textbf{Input perturbation.}
From Figure \ref{fig:image_perturbation_mnist}, the probability of discrimination does not increase even if the gradient of the perturbed image is randomized.
The gap between empirical and theoretical privacy strength in this setting is more expansive than in the benign setting.
Therefore, even if the image changes slightly due to perturbation, the gradient will not be easy to distinguish.

\textbf{Parameter retrogression.}
In this setup, we use a nonmalicious gradient $g_1$ and a $g_2$ generated from parameters that have been processed to increase loss.
Since the norm of $g_2$ is larger than the norm of $g_1$, we assume that discrimination would be easier if it were not randomized.
However, Figure \ref{fig:param_retrogression_mnist} shows that the gradient norm has a negligible effect on the discrimination probability.
This result is that in LDP-SGD, gradient clipping keeps large normed gradients below a threshold.

\textbf{Gradient flip.}
The measured $\epsilon$ is shown in Figure \ref{fig:gradflip_mnist}.
Discriminability was improved compared to the benign setting when the crafter flipped the gradient sign.
In particular, when $\epsilon=4$, $\epsilon_{\text{empirical}}$ shows 3.99, almost a theoretical value in white-box distinguisher.
As shown in the worst-case proof, LDP-SGD keeps the direction roughly proportional to the guaranteed $\epsilon$, so $g_1$ and $g_2$, which reverses the sign of $g_1$, are easy to distinguish.

\textbf{Collusion.}
The experimental results in Figure \ref{fig:gradflip_mm_mnist} show that the collusion between the crafter and the model trainer narrows the gap between theoretical privacy and empirical privacy compared to the gradient flipping.
In white-box distinguisher of this setting, the empirical $\epsilon$ reached the theoretical value at all the setting values $\epsilon = 0.5, 1, 2, 4$.
$\epsilon_{\text{empirical}}=4.07$ has the same meaning as successful discrimination with a probability of about $98.2\%$.
In the black-box setting, this collusion increased the empirical LDP against the weaker settings described in the above paragraphs.

\textbf{Dummy gradient.}
% From Figure \ref{fig:gradflip_maximize_mnist}, as with malicious model, this setting increases the probability of gradient discrimination.
From Figure \ref{fig:gradflip_maximize_mnist}, as with the collusion, this setting increases the probability of gradient discrimination.
Figure \ref{fig:result-grad-norm} shows the effect of the norm on the discrimination probability: even with a dummy gradient, if the norm is smaller than the clipping size $L$, the gap between the theoretical and measured $\epsilon$ becomes larger.
% Daring to generate gradients with large norm is not common in FL.
Daring to generate the dummy gradients with large norm is not common in FL.
However, it is clear that maximizing the norm of the gradient and sign flipping is the strongest attack.

%\subsection{Summary of Results}
\textbf{Summary of Results.}
Figure \ref{fig:summary}, \ref{fig:summary_black}, and \ref{fig:summary_white} summarizes the results for each dataset.
Over the four datasets, adversaries who directly manipulate raw gradients can reach the theoretical upper bounds given by the privacy parameter $\epsilon$.
Even in white-box scenarios, adversaries who only access the input data and parameters but not the raw gradient have limited capability.
Moreover, adversaries who only refer to the updated model but not any randomized gradient (i.e., black-box scenarios) have the limited capability,
collusion with the server employing the malicious model, increases the adversary's capability in the black-box setting.

\begin{figure}[t]
\begin{minipage}{.35\linewidth}
    \centering
    \includegraphics[width=.95\linewidth]{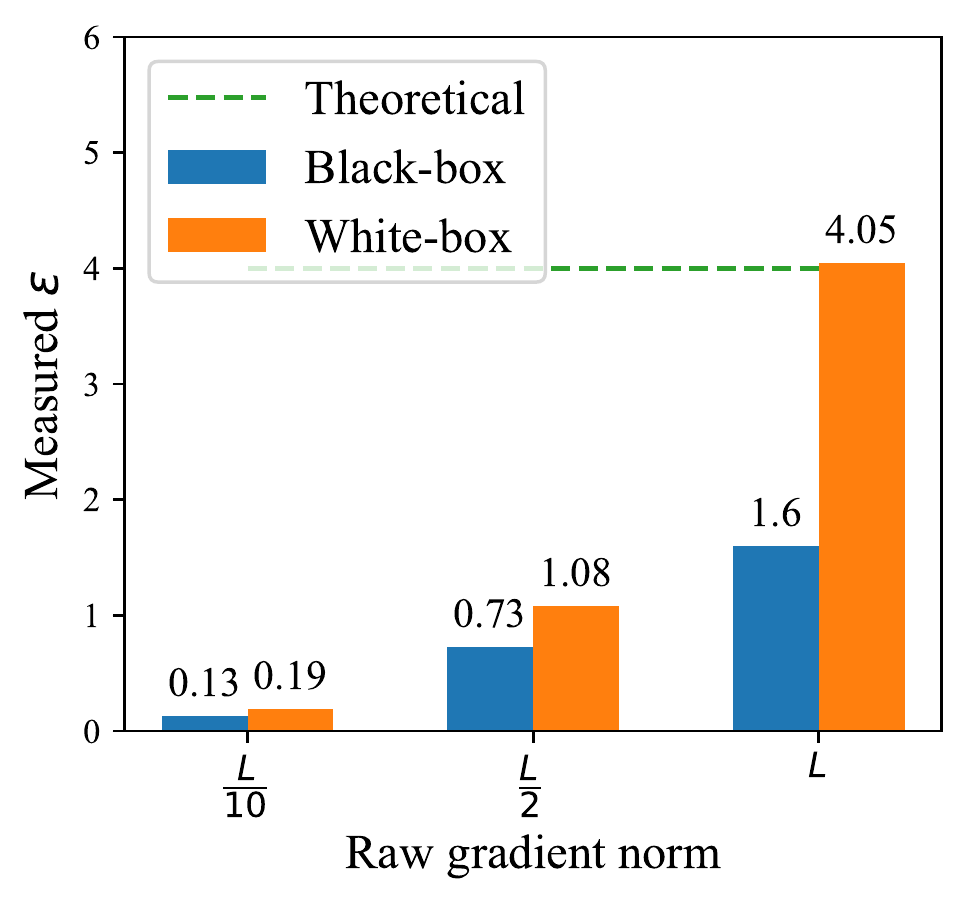}
     \caption{Effect of gradient norm.}
     \label{fig:result-grad-norm}
\end{minipage}
\hfill
\begin{minipage}{.63\linewidth}
%\begin{figure}
    \centering
    \begin{subfigure}{.48\linewidth}
        \centering
        \includegraphics[width=\linewidth]{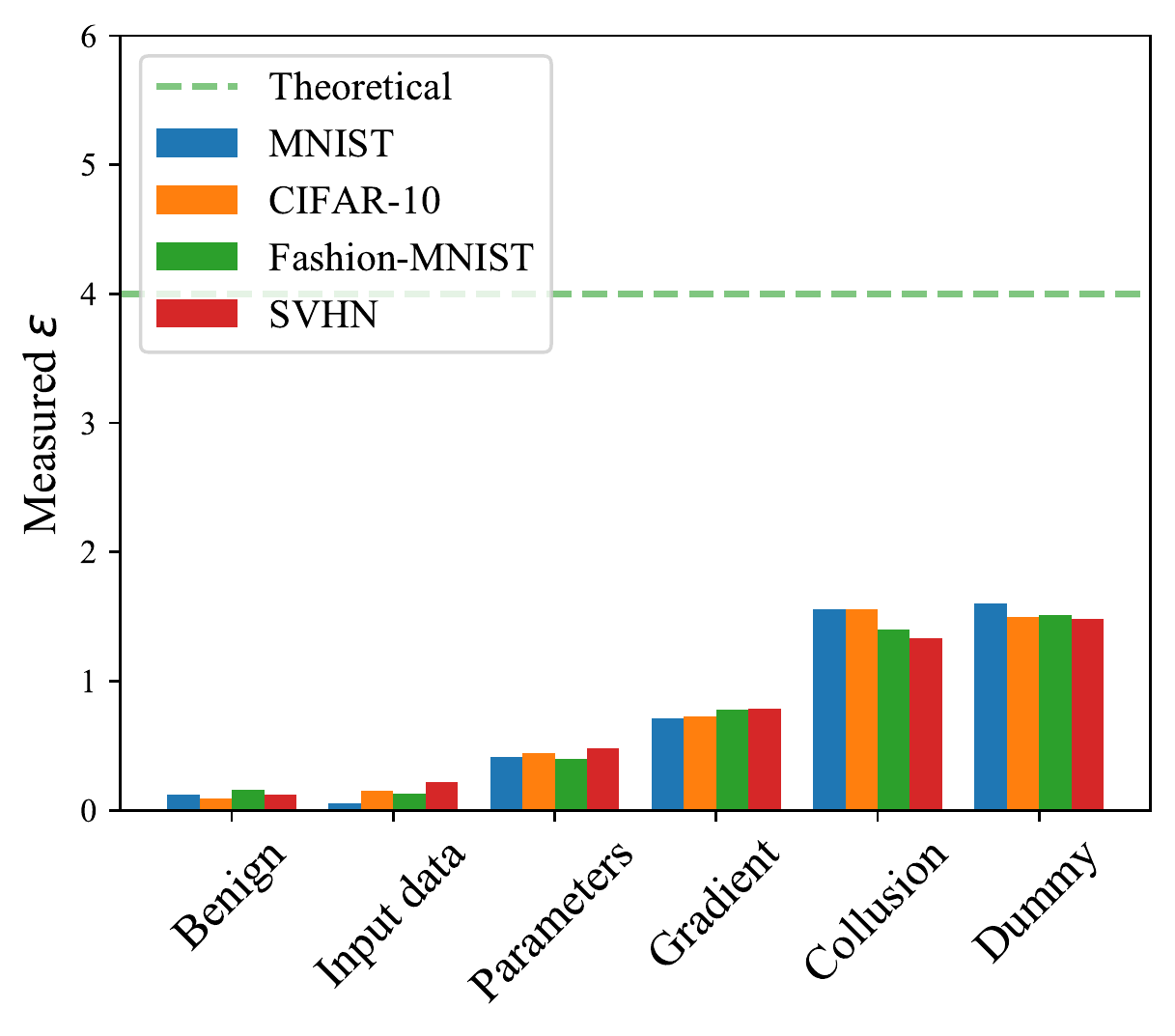}
         \caption{Black-box}
         \label{fig:summary_black}
    \end{subfigure}
    \begin{subfigure}{.48\linewidth}
        \centering
        \includegraphics[width=\linewidth]{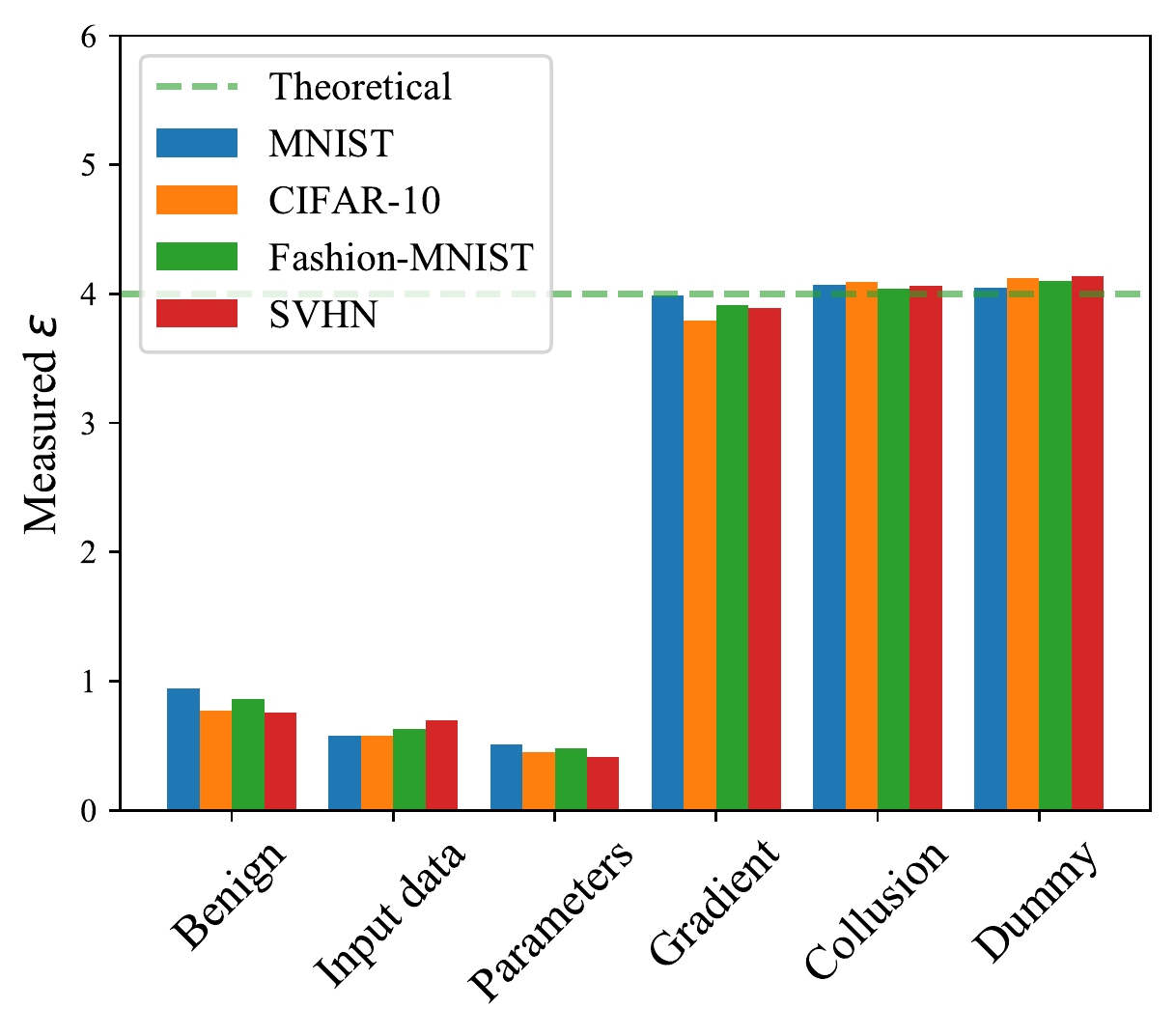}
         \caption{White-box}
         \label{fig:summary_white}
      \end{subfigure}
    \caption{Summary of results in four datasets.}
    \label{fig:result-summary}
%\end{figure}
\end{minipage}
\end{figure}

\subsection{Discussion towards Relaxations of Privacy Parameters}
% The white-box test resulted in tightly close to the theoretical bound that defined by the privacy parameters $\epsilon$.
The white-box test resulted in tightly close to the theoretical bound defined by the privacy parameters $\epsilon$.
While the black-box test demonstrated stronger privacy levels that were far from the bound.
To prevent the white-box attacks, we can install a secure aggregator based on such as multi-party computations and trusted execution environments (TEEs) \citep{mckeen2013innovative,costan2016intel}.
Using such secure aggregators for updating the global model, the process from reporting gradients to updating the global model is fully encrypted.
The experimental results above do not allow any relaxation only with local randomizers, but with shuffler~\citep{bittau2017prochlo,erlingsson2019amplification, feldman2022hiding, liew2022network} or aggregator implemented by such secure computations~\citep{bonawitz2017practical, kato2022olive}, relaxation of $\epsilon$ can be reasonable in the view of empirical privacy risks.
% Even when the shuffler is untrusted and the gradient is randomized with a relaxed privacy parameter (like $\epsilon=8$), the empirical privacy risk against the shuffler might be not significant (like $\epsilon_{\text{empirical}}=1$).
Even when the shuffler is untrusted and the gradient is randomized with a relaxed privacy parameter (like $\epsilon=8$), the empirical privacy risk against the shuffler might not be significant (like $\epsilon_{\text{empirical}}=1$).
Another possible way is to prohibit gradient poisoning.
As shown in the experiments, without directly manipulating gradients, the empirical privacy is significantly far from the given privacy bound.
% In case, the range of possible input training data is limited, we can introduce out-of-distribution detectors before randomizing raw gradients.
If the range of possible input training data is limited, we can introduce out-of-distribution detectors before randomizing raw gradients.
In other words, by adding some restrictions to the crafter, the model trainer, and the distinguisher, it may also be possible to relax $\epsilon$.

\section{Conclusion}
We introduced the empirical privacy test by measuring the lower bounds of LDP.
We then instantiated six adversaries in FL under LDP to measure empirical LDP at various attack surfaces, including the worst-case attack that reached the theoretical upper bound of LDP.
%The empirical privacy test with the adversary instantiations enables us to interpret LDP more intuitively and discuss relaxation of the privacy parameter until a particular instantiated attack surfaces.
We also demonstrated numerical observations of the measured privacy in these adversarial settings.
In the end, we discussed the possible relaxation of privacy parameter $\epsilon$ with using a trusted entity.

\newpage

\appendix
\section{Experimental Details}
This appendix provides details of the setup and additional experimental results.

\subsection{Experimental Setup}
\textbf{Datasets.}
We run experiments on four datasets:
\begin{itemize}
    \item MNIST: We use the MNIST image dataset, which consists of 28$\times$28 handwritten grayscale digit images, and the task is to recognize the 10 class digits. 
    \item CIFAR-10: We use CIFAR10, a standard benchmark dataset consisting of 32$\times$32 RGB images. The learning task is to classify the images into ten classes of different objects.
    \item Fashion-MNIST: Each example of Fashion-MNIST is a 28$\times$28 grayscale image associated with a label from 10 classes.
    \item SVHN: This dataset is obtained from house numbers in Google Street View images.
SVHN consists of 32$\times$32 RGB images, and the task is to recognize the 10 class digits.
\end{itemize}

\textbf{Neural Network Architecture.}
The neural network we use in the experiments is as in Table \ref{tab:nn-arch}.

\begin{table}[h]
\begin{minipage}{0.54\linewidth}
\begin{center}

\caption{Neural network architecture in measurements.}
    \label{tab:nn-arch}
\begin{tabular}{c|c}\toprule
        Layer & Parameters\\ \midrule
        Convolution & 16 filters of 8$\times$8, strides 2 \\
        Max-Pooling & 2$\times$2 window, strides 2\\
        Convolution &32 filters of 4$\times$4, strides 2 \\
        Max-Pooling & 2$\times$2 window, strides 2\\
        Linear & 32 units\\
        Softmax & 10 units \\\bottomrule
    \end{tabular}
\end{center}
\end{minipage}
\hfill
\begin{minipage}{0.4\linewidth}
\begin{center}
\caption{The black-box test for multiple clients in the benign crafter. (MNIST)}
    \label{tab:num_clients}
\begin{tabular}{c|c}\toprule
        Number of clients & Measured $\epsilon$\\ \midrule
        1 & 0.12 \\
        2 & 0.07\\
        4 & 0.06 \\
        10 & 0.06\\\bottomrule
    \end{tabular}
\end{center}
\end{minipage}
\end{table}

\subsection{Additional Observations}

\textbf{Dataset Perspectives.}
Figure \ref{fig:result-cifar10}, \ref{fig:result-fmnist} and \ref{fig:result-svhn} show that the empirical privacy in federated learning in CIFAR-10, Fashion-MNIST, and SVHN datasets.
The observation of $\epsilon_{\text{empirical}}$ using four datasets shows that $\epsilon_{\text{empirical}}$ is data-independent and has a similar trend.
As in Section \ref{sec:observation}, the white-box LDP test is more potent than the black-box LDP test.
Furthermore, the craft of dummy gradients is closest to the theoretical $\epsilon$. 

\textbf{White-box test showed higher $\epsilon_{\text{empirical}}$ than black-box test.}
This is a natural result in black-box scenarios since the model trainer aggregate gradients such as averaging, learning rate multiplication, and $\ell_2$-projection.
The more clients used to update parameters in FL, the more pronounced this trend becomes.
Table \ref{tab:num_clients} shows that the black-box test for multiple clients becomes more difficult as the number of clients increases.

\textbf{Input perturbation and parameter retrogression were not successful.}
As mentioned in Section \ref{sec:ldp-sgd}, LDP-SGD includes the clipping gradient and the random gradient sampling.
The clipping gradient ignores large differences between the norms of $g_1$ and $g_2$.
Furthermore, the random gradient sampling rotates the gradient by at most 90 degrees, so any slight difference in the direction of $g_1$ and $g_2$ is ignored.

% \textbf{Hyper-parameters.}
% Figure \ref{fig:result-clipsize} shows the measurement results for different clipping norm $L$ for LDP-SGD.
% The strength of each adversary does not change much when the clipping size is changed.

% \begin{figure}
%     \centering
%     % \begin{minipage}[t]{.48\linewidth}
%     % %   \vspace{-6.5em}
%     % \includegraphics[width=\textwidth]
%     %       {img/FL_overview}
%     %  \caption{Federated learning in our paper.
%     %         The clients generate gradients with the parameters distributed by the model trainer and randomize the gradients under LDP.}
%     %  \label{fig:fl-overview}
%     %  \end{minipage}
%     %  \hspace{3pt}
%     \begin{minipage}[t]{.50\linewidth}
%     %   \vspace{-6.5em}
%     \includegraphics[width=\textwidth]
%           {img/LDP-SGD-image}
%      \caption{Visualization of gradient randomization in LDP-SGD (Algorithm \ref{alg-ldp-sgd})}
%      \label{fig:ldp-sgd}
%      \end{minipage}
% \end{figure}

\begin{figure}[t]
  \centering
  \begin{subfigure}{.29\linewidth}
    \centering
    \includegraphics[width=\linewidth]{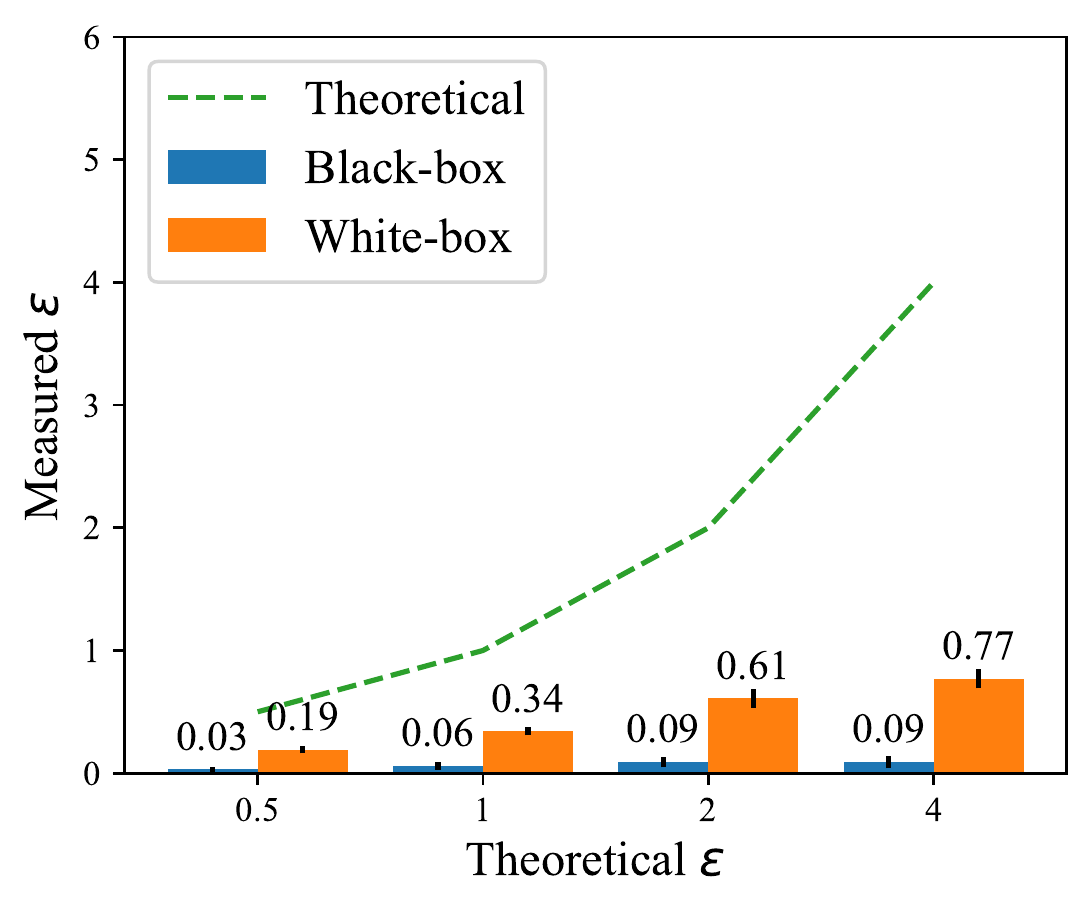}
     \caption{Benign setting}
     \label{fig:benign_setting_cifar}
  \end{subfigure}
  \begin{subfigure}{.29\linewidth}
    \centering
    \includegraphics[width=\linewidth]{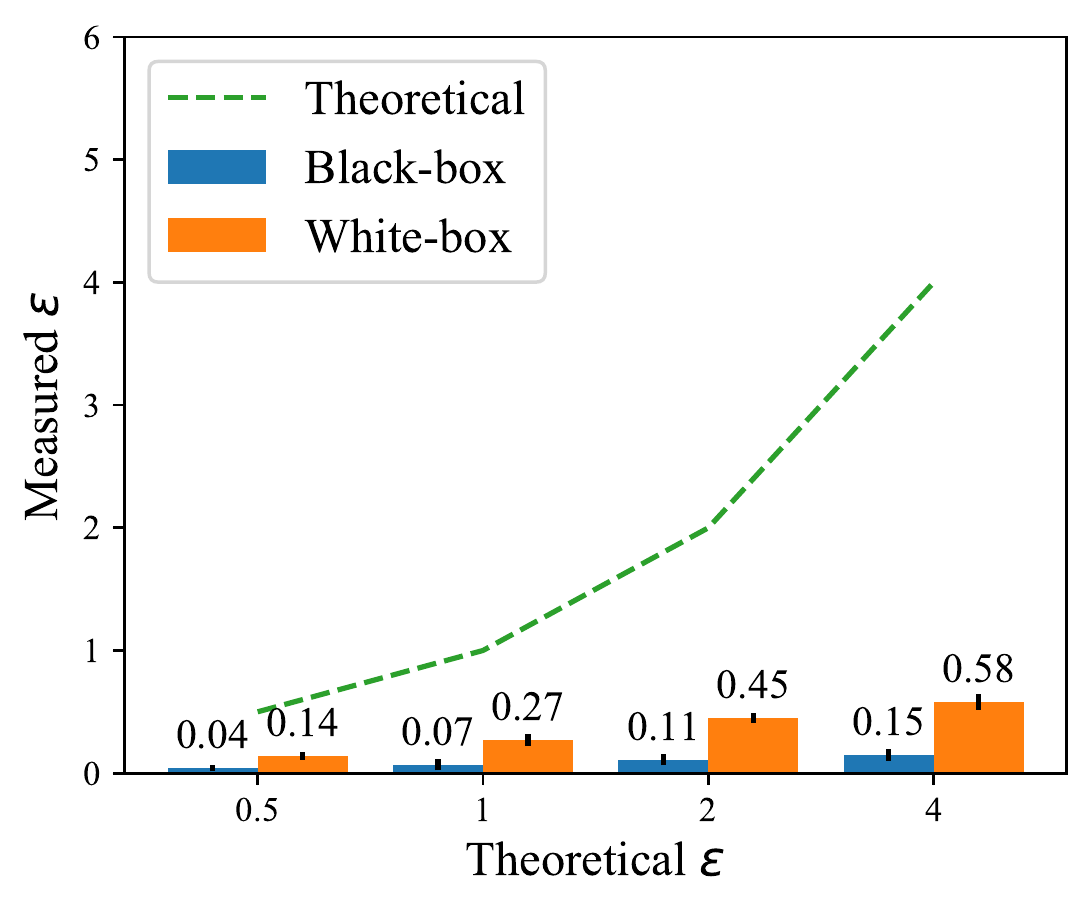}
     \caption{Image perturbation}
     \label{fig:result_perturbation_cifar}
  \end{subfigure}
  \begin{subfigure}{.29\linewidth}
    \centering
    \includegraphics[width=\linewidth]{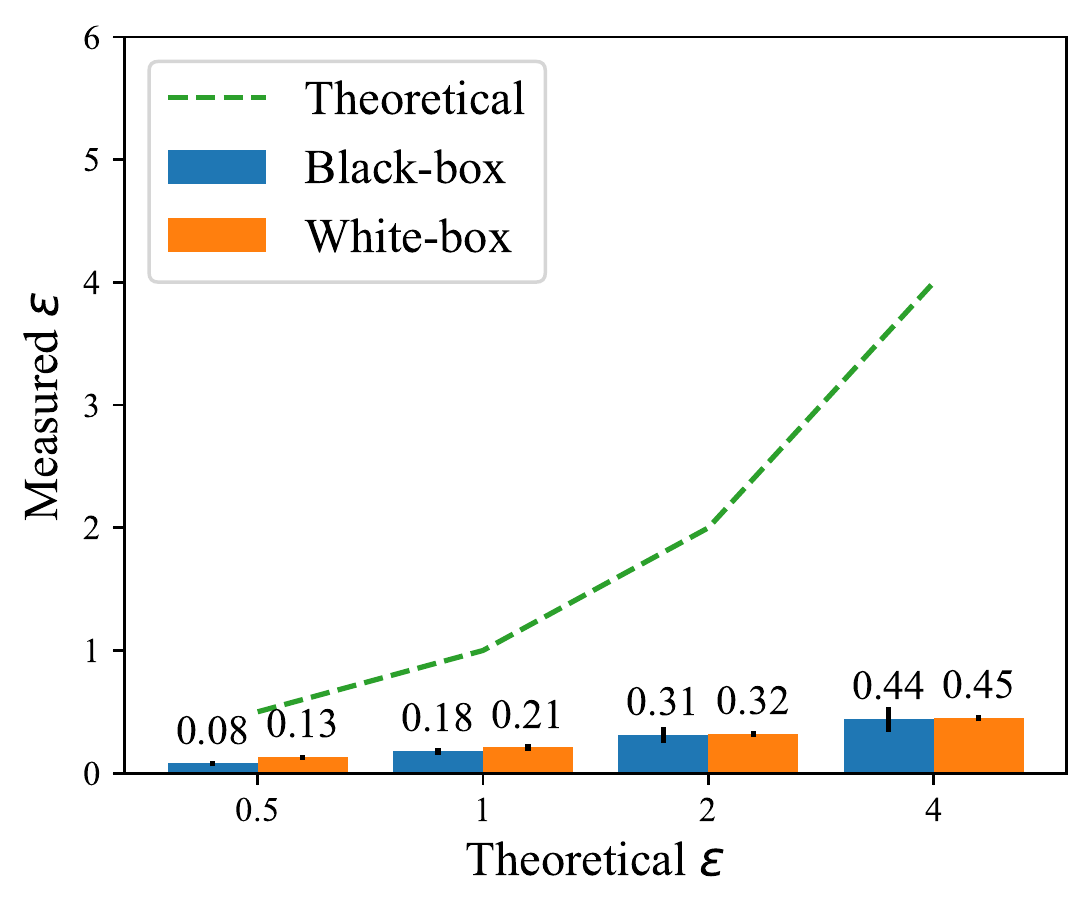}
     \caption{Parameter retrogression}
     \label{fig:result_parameter_cifar}
  \end{subfigure}
  
  \begin{subfigure}{.29\linewidth}
    \centering
    \includegraphics[width=\linewidth]{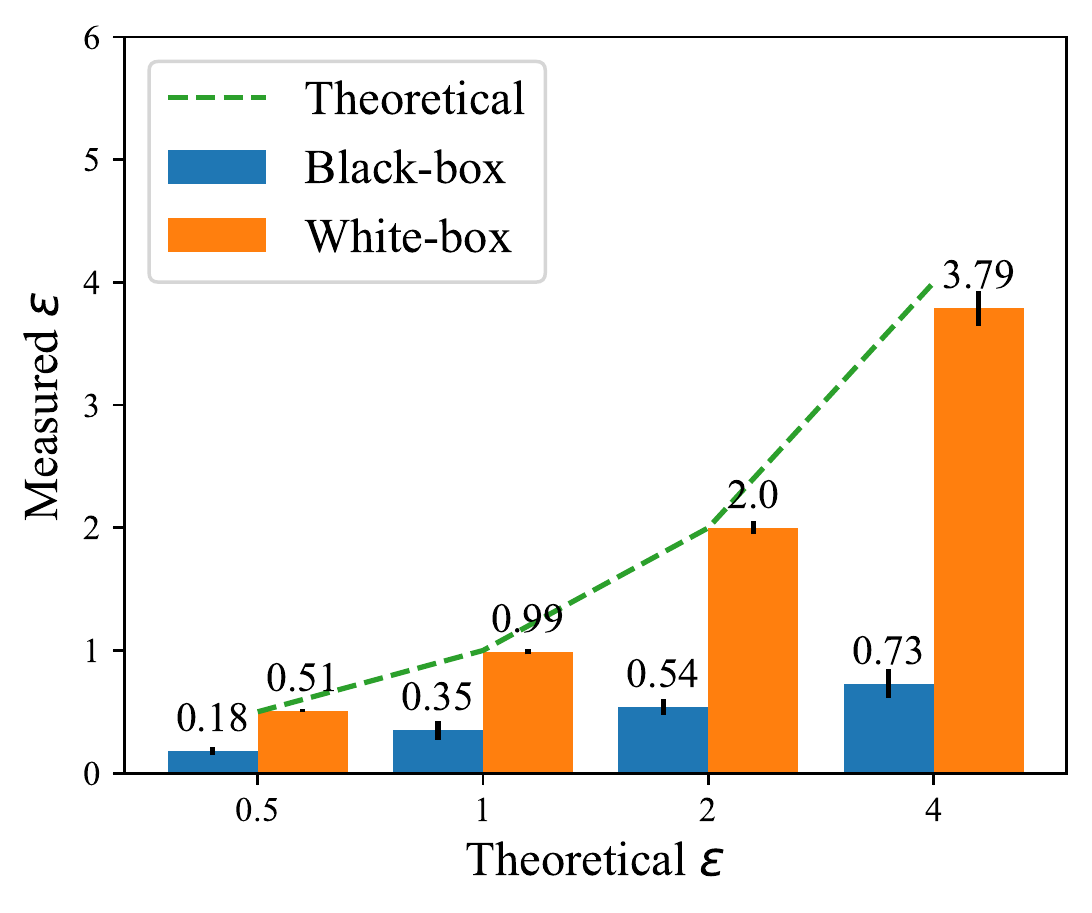}
     \caption{Gradient flip}
     \label{fig:result_gradflip_cifar}
  \end{subfigure}
  \begin{subfigure}{.29\linewidth}
    \centering
    \includegraphics[width=\linewidth]{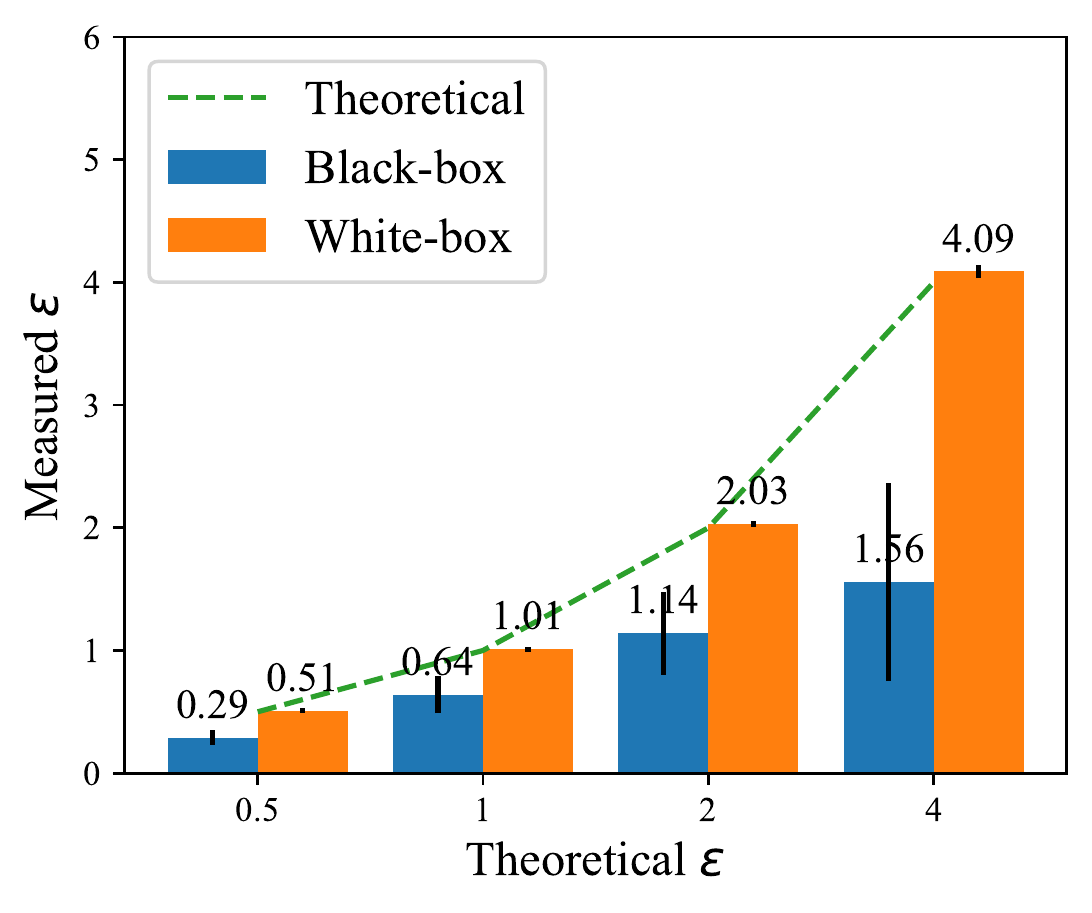}
     \caption{Collusion}
     \label{fig:result_graflip_mm_cifar}
  \end{subfigure}
  \begin{subfigure}{.29\linewidth}
    \centering
    \includegraphics[width=\linewidth]{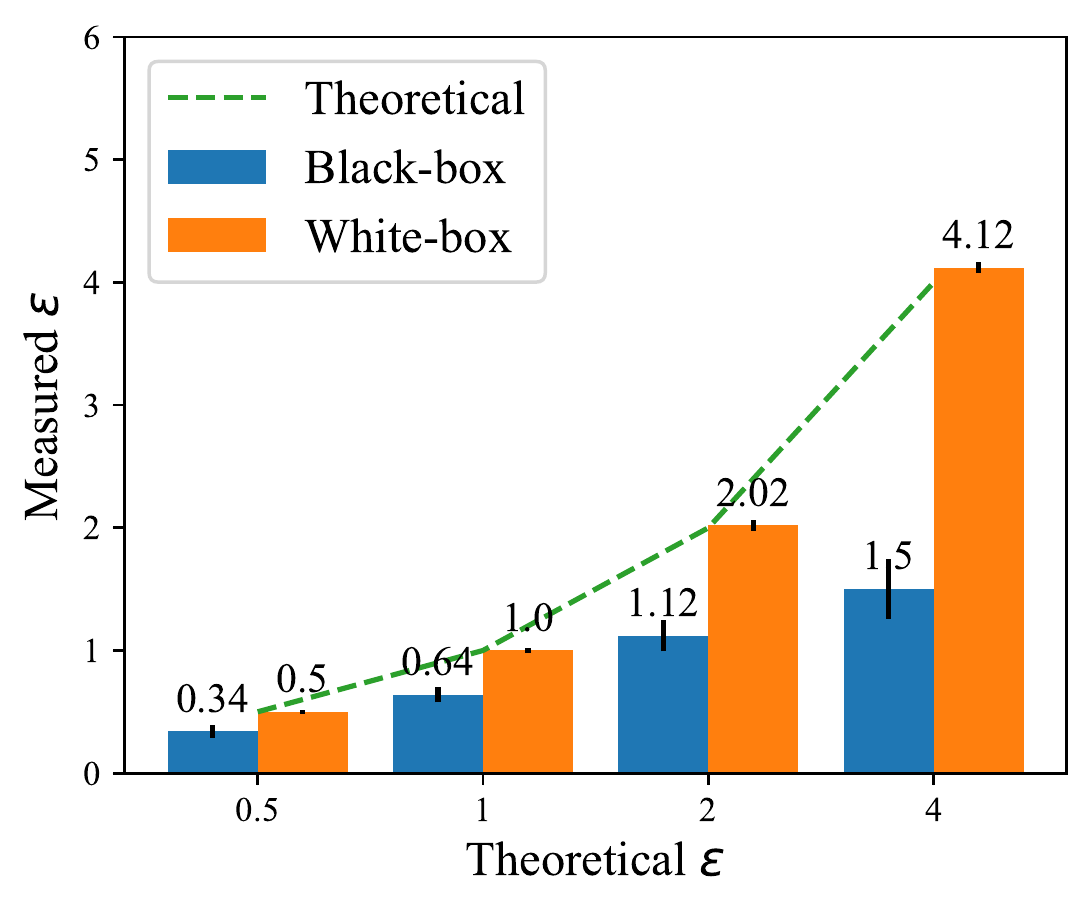}
     \caption{Dummy gradient}
     \label{fig:result_gramax_cifar}
  \end{subfigure}

  \caption{The empirical privacy in federated learning. (CIFAR-10)}
  \label{fig:result-cifar10}
\end{figure}

\begin{figure}[t]
  \centering
  \begin{subfigure}{.29\linewidth}
    \centering
    \includegraphics[width=\linewidth]{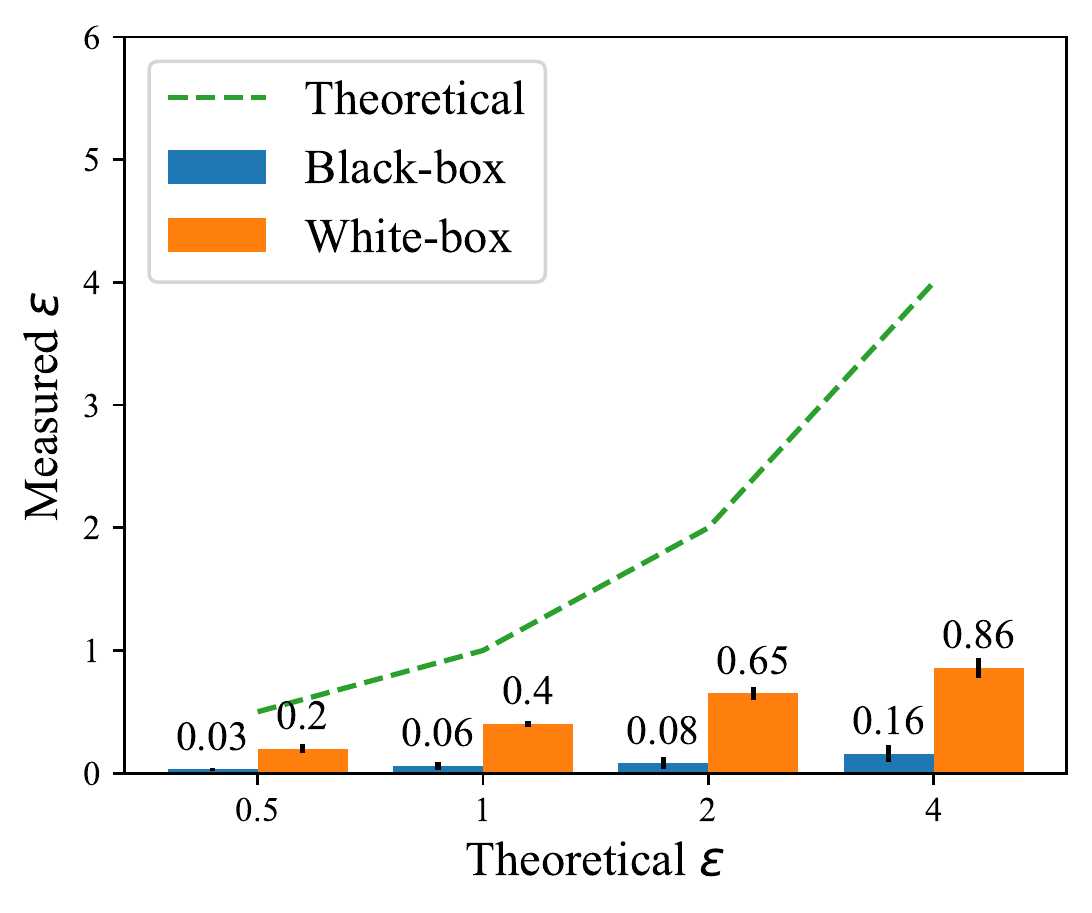}
     \caption{Benign setting}
     \label{fig:benign_setting_fmnist}
  \end{subfigure}
  \begin{subfigure}{.29\linewidth}
    \centering
    \includegraphics[width=\linewidth]{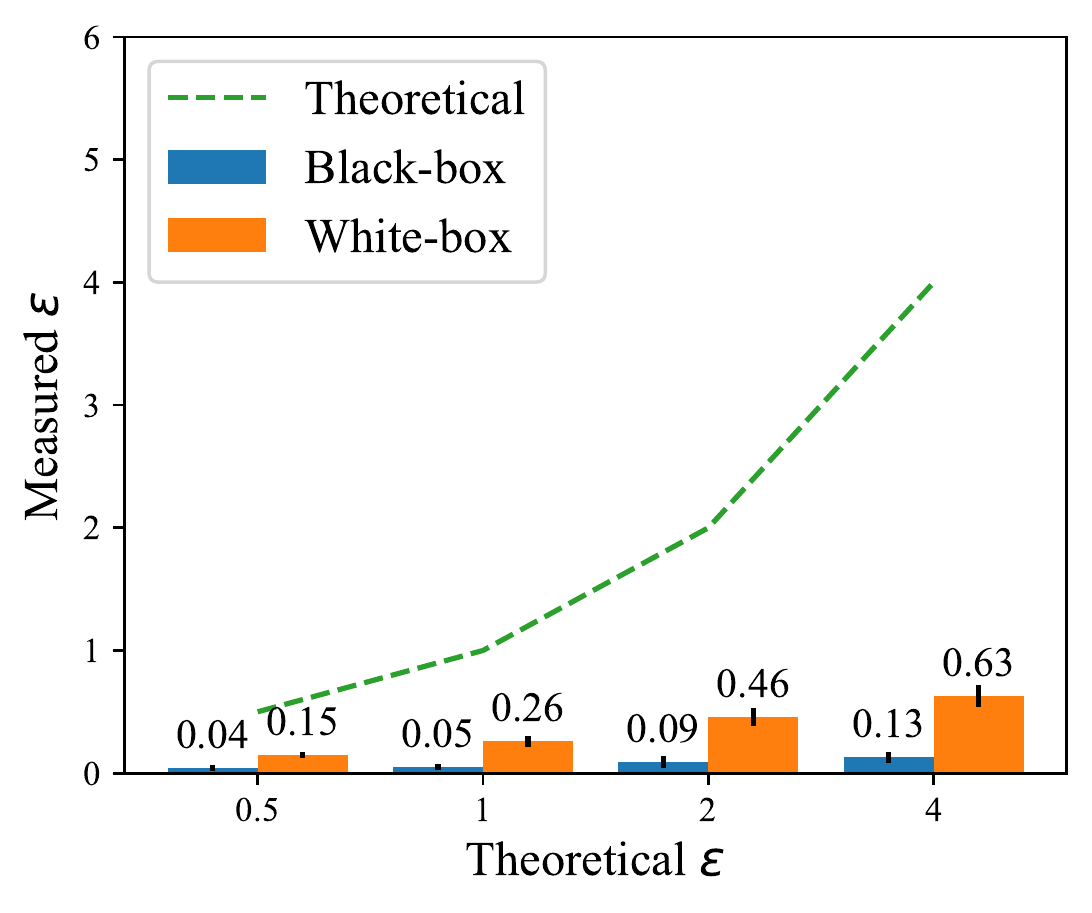}
     \caption{Image perturbation}
     \label{fig:image_perturbation_fmnist}
  \end{subfigure}
  \begin{subfigure}{.29\linewidth}
    \centering
    \includegraphics[width=\linewidth]{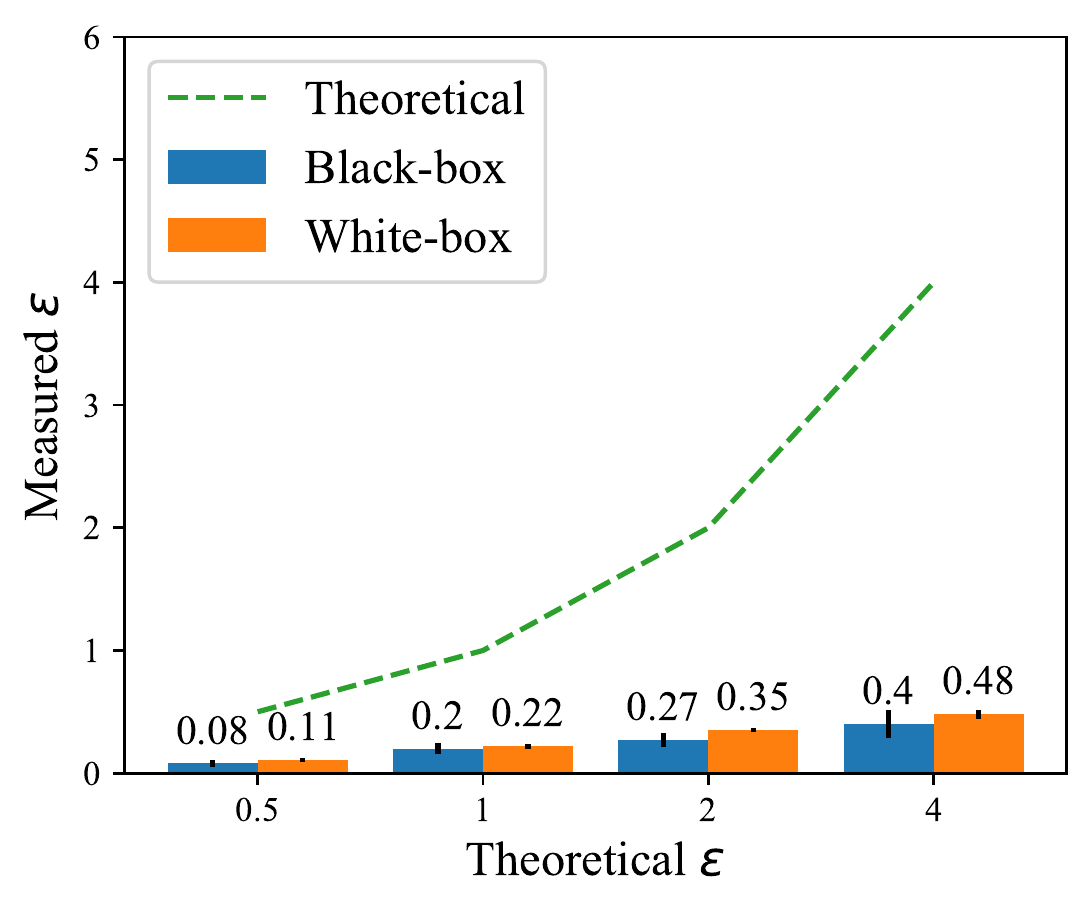}
     \caption{Parameter retrogression}
     \label{fig:param_retrogression_fmnist}
  \end{subfigure}
  
  \begin{subfigure}{.29\linewidth}
    \centering
    \includegraphics[width=\linewidth]{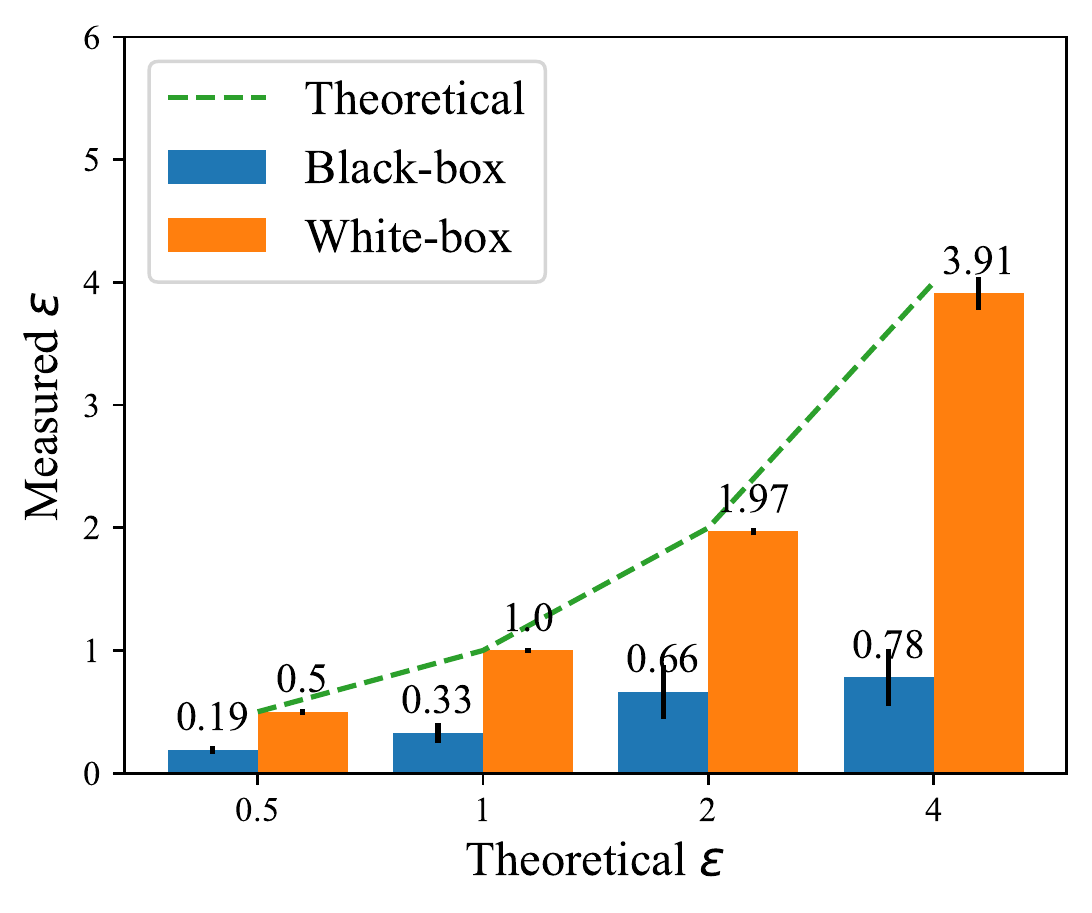}
     \caption{Gradient flip}
     \label{fig:gradflip_fmnist}
  \end{subfigure}
  \begin{subfigure}{.29\linewidth}
    \centering
    \includegraphics[width=\linewidth]{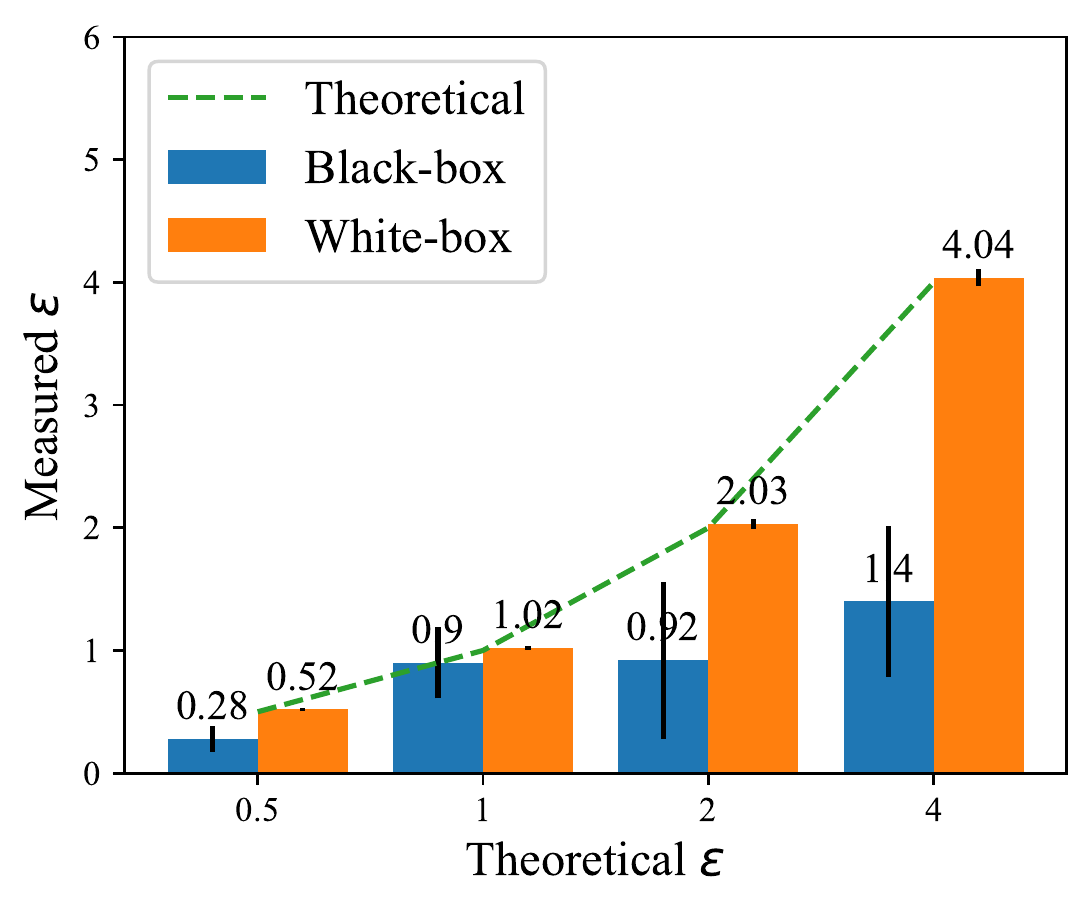}
     \caption{Collusion}
     \label{fig:gradflip_mm_fmnist}
  \end{subfigure}
  \begin{subfigure}{.29\linewidth}
    \centering
    \includegraphics[width=\linewidth]{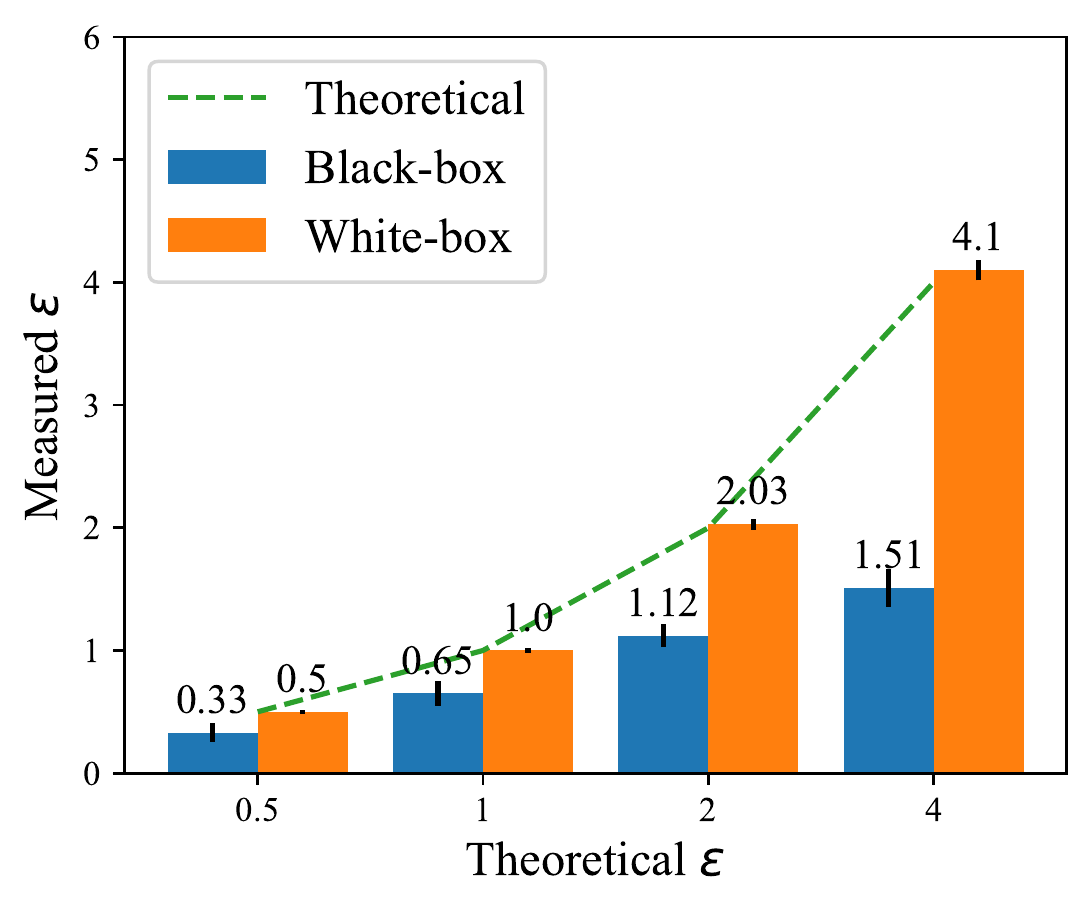}
     \caption{Dummy gradient}
     \label{fig:gradflip_maximize_fmnist}
  \end{subfigure}

  \caption{The empirical privacy in federated learning. (Fashion-MNIST)}
  \label{fig:result-fmnist}
\end{figure}

\begin{figure}[t]
  \centering
  \begin{subfigure}{.29\linewidth}
    \centering
    \includegraphics[width=\linewidth]{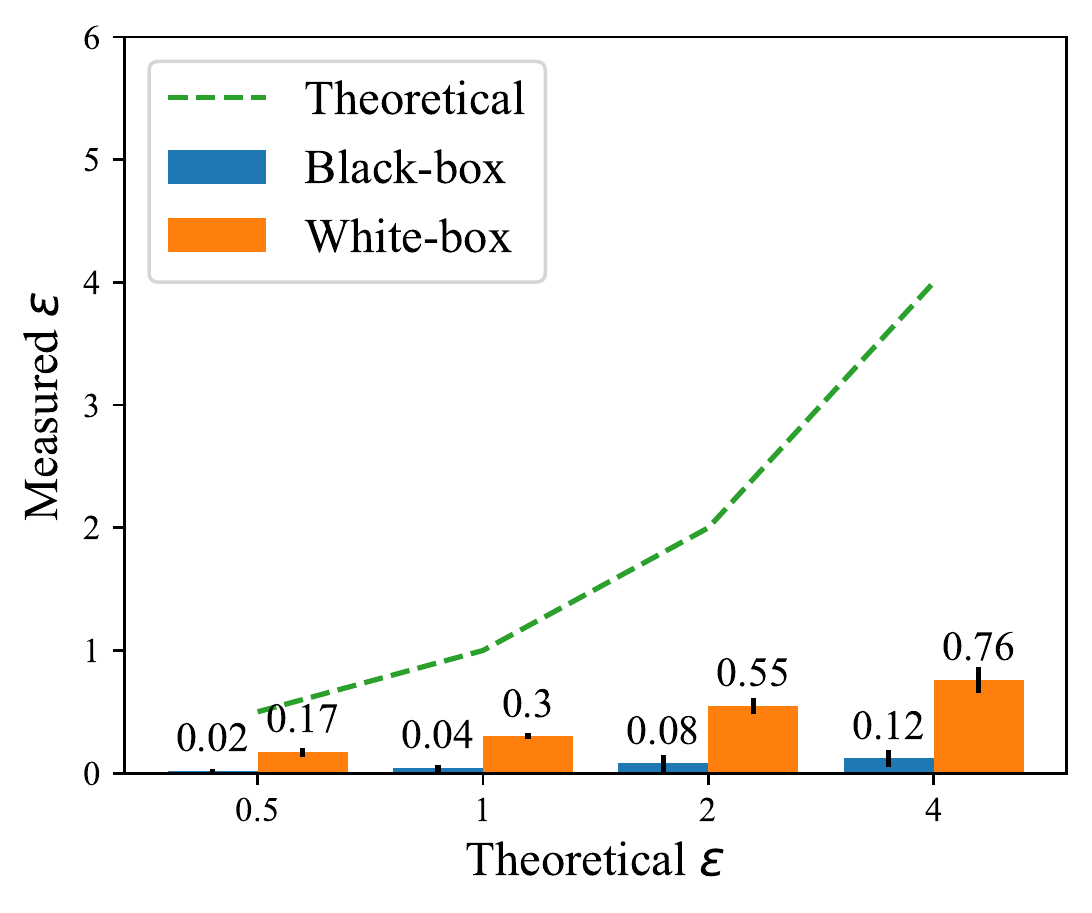}
     \caption{Benign setting}
     \label{fig:benign_setting_svhn}
  \end{subfigure}
  \begin{subfigure}{.29\linewidth}
    \centering
    \includegraphics[width=\linewidth]{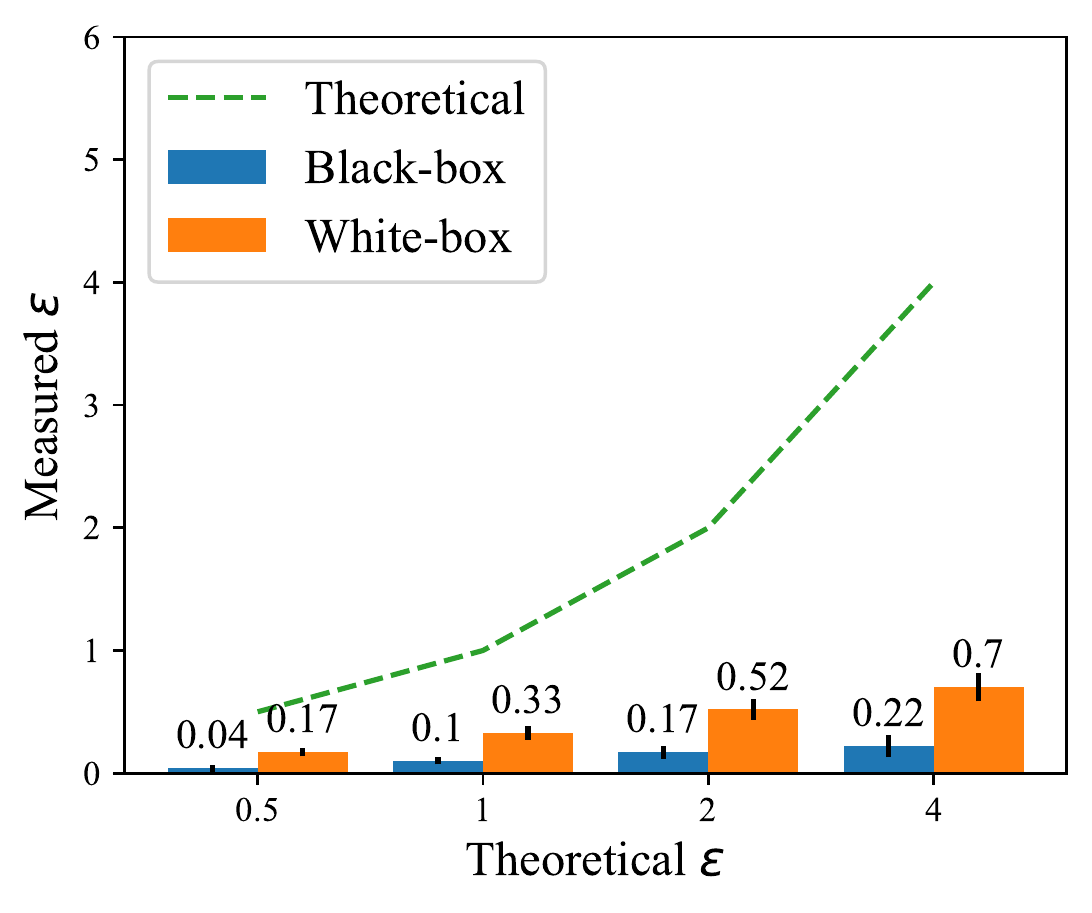}
     \caption{Image perturbation}
     \label{fig:image_perturbation_svhn}
  \end{subfigure}
  \begin{subfigure}{.29\linewidth}
    \centering
    \includegraphics[width=\linewidth]{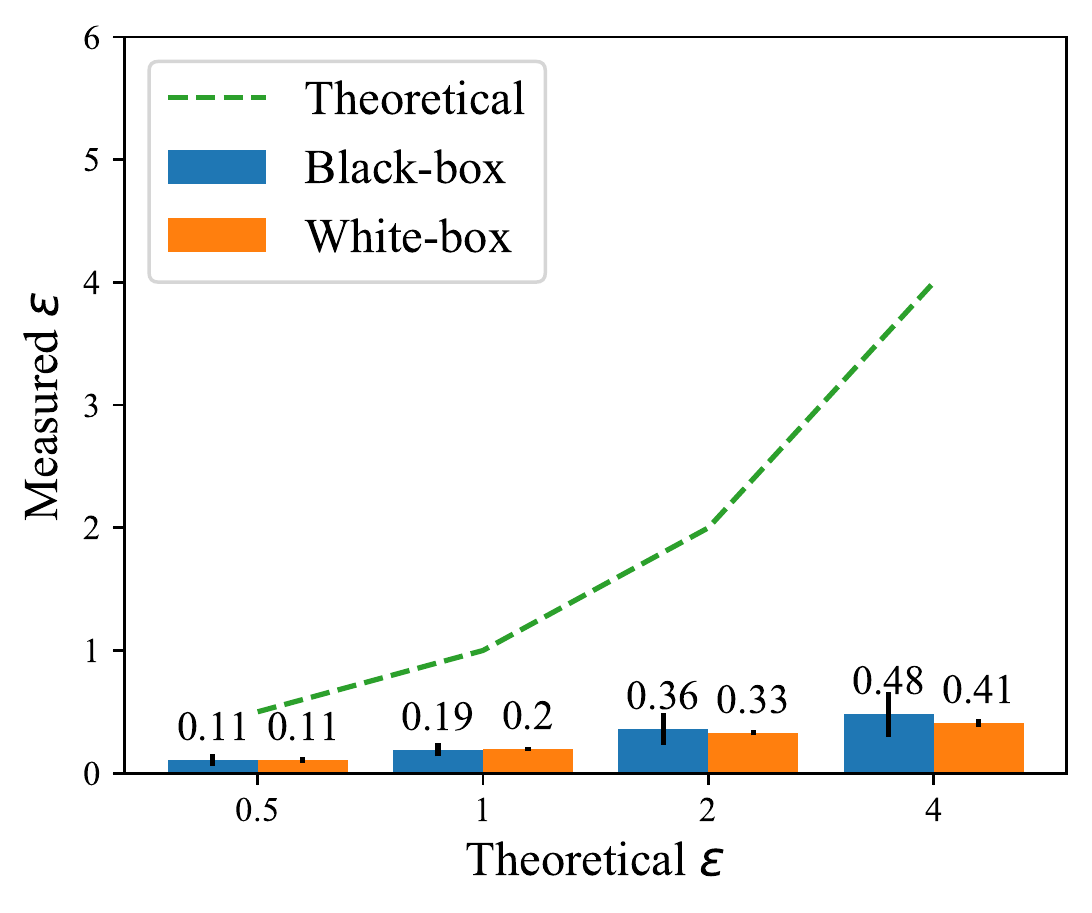}
     \caption{Parameter retrogression}
     \label{fig:param_retrogression_svhn}
  \end{subfigure}
  
  \begin{subfigure}{.29\linewidth}
    \centering
    \includegraphics[width=\linewidth]{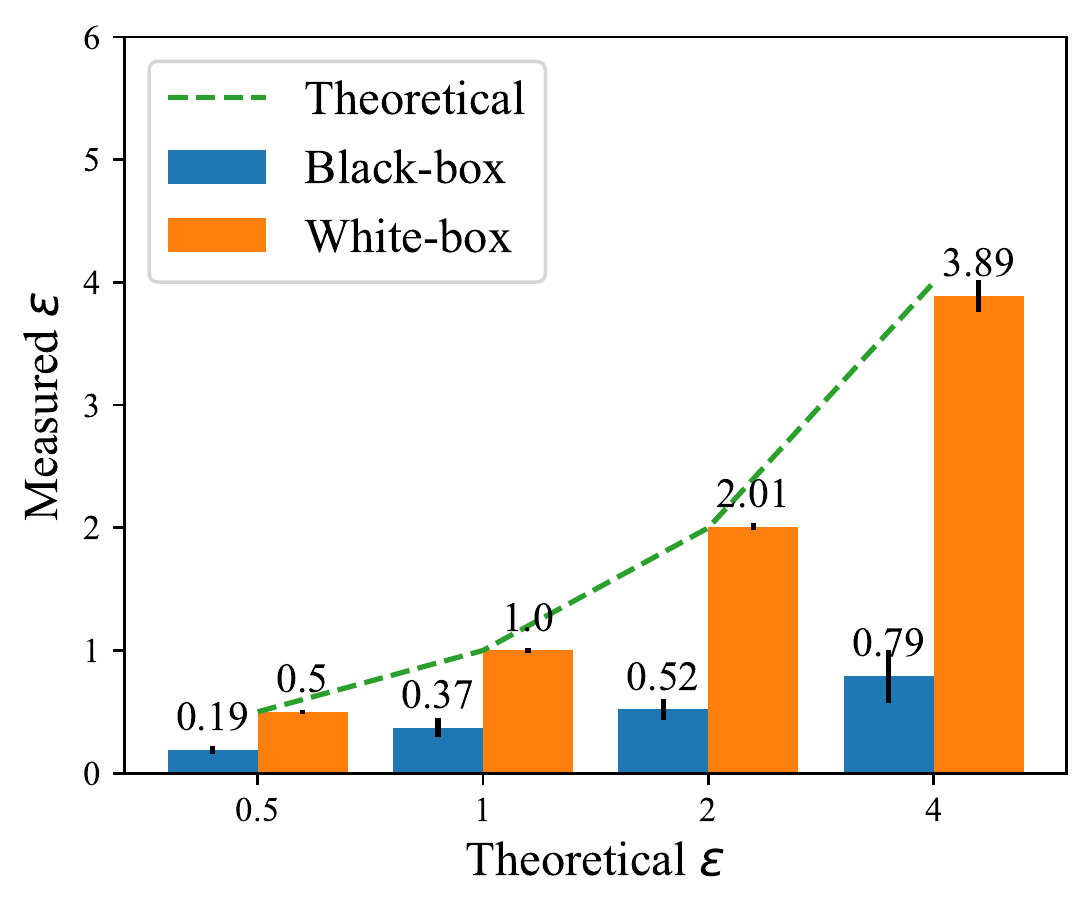}
     \caption{Gradient flip}
     \label{fig:gradflip_svhn}
  \end{subfigure}
  \begin{subfigure}{.29\linewidth}
    \centering
    \includegraphics[width=\linewidth]{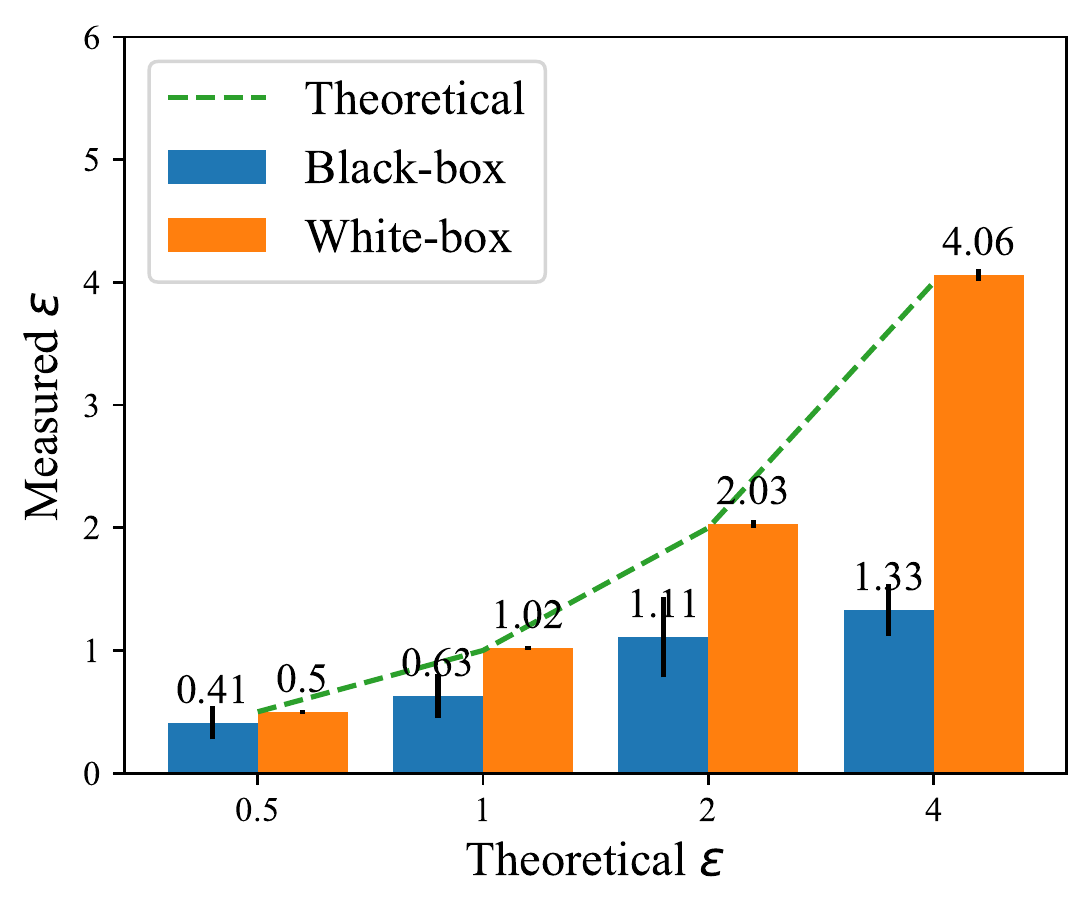}
     \caption{Collusion}
     \label{fig:gradflip_mm_svhn}
  \end{subfigure}
  \begin{subfigure}{.29\linewidth}
    \centering
    \includegraphics[width=\linewidth]{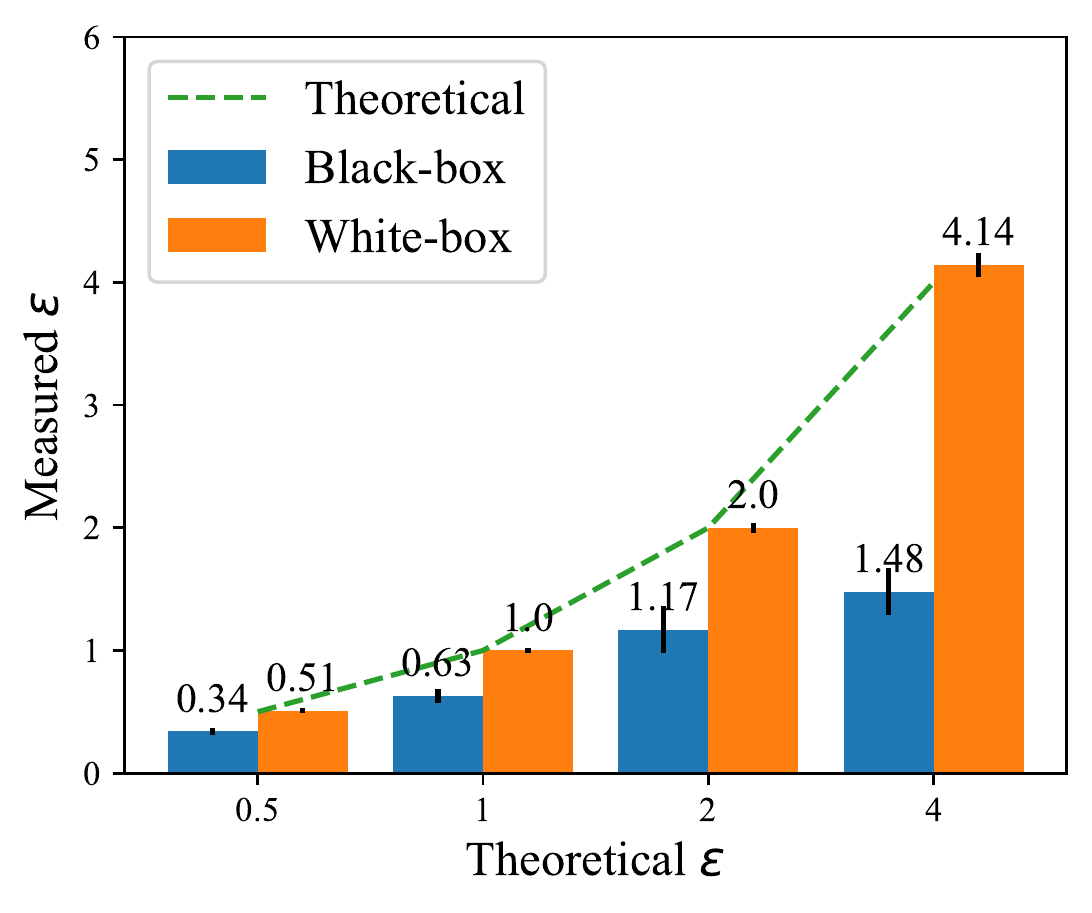}
     \caption{Dummy gradient}
     \label{fig:gradflip_maximize_svhn}
  \end{subfigure}

  \caption{The empirical privacy in federated learning. (SVHN)}
  \label{fig:result-svhn}
\end{figure}

\end{document}